\documentclass[a4paper, 12pt]{article}
\usepackage{authblk}
\usepackage{amsmath}
\usepackage{mathtools}
\usepackage{setspace}
\usepackage{algorithm}
\usepackage{algpseudocode}
\usepackage{amsfonts}
\usepackage{enumitem}
\usepackage{xcolor}
\definecolor{ForestGreen}{RGB}{34,139,34}
\usepackage{bm}
\usepackage{comment}
\usepackage{subcaption}
\usepackage[comma,authoryear,round]{natbib}
\usepackage{booktabs}
\usepackage{threeparttable}
\usepackage{graphicx}
\usepackage{bibunits}
\usepackage[a4paper,bottom=0.8in,top=0.8in]{geometry}
\usepackage{tikz}
\usetikzlibrary{shapes,decorations,arrows,calc,arrows.meta,fit,positioning}
\tikzset{
    -Latex,auto,node distance =1 cm and 1 cm,semithick,
    state/.style ={ellipse, draw, minimum width = 0.7 cm},
    point/.style = {circle, draw, inner sep=0.04cm,fill,node contents={}},
    bidirected/.style={Latex-Latex,dashed},
    el/.style = {inner sep=2pt, align=left, sloped}
}

\usepackage{amsthm}

\theoremstyle{definition}
\newtheorem{definition}{Definition}

\theoremstyle{plain}
\newtheorem{theorem}{Theorem}

\newtheorem{assumption}{Assumption}
\newtheorem{Aassumption}{Assumption}
\newtheorem{bassumption}{Assumption}

\newtheorem{condition}{Condition}

\newtheorem{lemma}{Lemma}
\newtheorem{remark}{Remark}
\newtheorem{Aremark}{Remark}

\newtheorem{corollary}{Corollary}

\newcommand{\indep}{\perp \!\!\! \perp}
\usepackage{indentfirst}
\usepackage{caption}
\RequirePackage[colorlinks,citecolor=blue,urlcolor=blue, linkcolor = blue]{hyperref}
\usepackage{nameref}
\begin{document} 
\title{\textbf{\Large Causal Inference for Unobservable Multivariate Outcomes, with Applications to Brain Effective Connectivity}}
\author[1]{\large Haiyue Song}
\author[1]{\large Ani Eloyan}
\author[,1]{\large Youjin Lee\footnote{Correspondence: Youjin Lee, Department of Biostatistics, Brown University, Providence, RI 02903, USA. youjin\_lee@brown.edu}}
\author[ ]{\\ \large for the Alzheimer’s Disease Neuroimaging Initiative\footnote{Data used in preparation of this article were obtained from the Alzheimer's Disease Neuroimaging Initiative (ADNI) database (adni.loni.usc.edu). As such, the investigators within the ADNI contributed to the design and implementation of ADNI and/or provided data but did not participate in the analysis or writing of this report. A complete listing of ADNI investigators can be found at: http://adni.loni.usc.edu/wp-content/uploads/how\_to\_apply/ADNI\_Acknowledgement\_List.pdf}}
\affil[1]{\normalsize Department of Biostatistics, Brown University, Providence, RI 02903, USA}
\date{}
\maketitle
\vspace{-2.2em}
\doublespacing
\begin{abstract}
Evaluating the causal effect of an intervention on multivariate outcomes is challenging when the outcomes are interdependent and derived rather than directly observed. Effective connectivity, which summarizes the directional neural communication between brain regions, is one such derived relational outcome. Estimating how external interventions affect effective connectivity introduces two layers of causal inference problems: identifying directional relationships among brain regions from high-dimensional neuroimaging time series and estimating the causal effect of the intervention on these derived relationships. Each layer introduces distinct biases. The first arises from within-outcome dependencies unrelated to the intervention; to address this, we propose a sample-splitting method for estimating meaningful, and potentially causally informative, effective connectivity measures. The second arises from confounding between the intervention and the derived outcomes; to address this, we apply inverse probability weighting methods and incorporate multiple testing when causal effects on multiple components of the outcomes are of interest. We demonstrate, through theoretical results and simulations, that the proposed methods are asymptotically valid under certain conditions with effective type-I and familywise error control. Finally, we apply the proposed methods to examine the causal effect of amyloid on effective connectivity using the resting-state fMRI data from the Alzheimer’s Disease Neuroimaging Initiative database.
\\
\textbf{Keywords:} Brain effective connectivity; Causal inference; Latent outcomes; Selective inference. 
\end{abstract}

\section{Introduction} \label{section1: introduction}

Evaluating the causal effect of an intervention on multivariate outcomes is challenging, particularly when inference targets specific components of the outcome rather than an overall outcome. In such settings, valid inference requires accounting not only for potential confounding between the intervention and the outcomes, but also for dependence among outcome components that may induce spurious associations. For example, consider the causal effect of an external intervention on functional synchrony of the human brain, as quantified by \textit{effective connectivity} between pairs of brain regions. Effective connectivity refers to the directional interactions among brain regions~\citep{friston_functional_1994,  stephan_analyzing_2010}. In this context, the inferential goal is to examine how an intervention alters these directional interaction patterns, while accounting for the fact that baseline connectivity may exist even in the absence of intervention. A key challenge, therefore, is to disentangle dependence among outcome components  to obtain effective connectivity measures valid for downstream causal inference, while also addressing confounding between different intervention groups.

A further challenge in causal inference with such multivariate, relational outcomes is that the outcomes of interest are often defined in an abstract way or in a measure-specific manner. In many applications, outcomes of interest are not directly observed but are constructed through model-based approaches. For instance, effective connectivity is typically inferred using parametric or semi-parametric models, yet there is no generally accepted framework for defining or estimating directional interactions among brain regions. Similar issues arise in other domains, including psychology~\citep{vanderweele2022constructed, gilbert2025measurement}, political science~\citep{knox2022testing, stoetzer2025causal}, and genomics~\citep{lozano2009grouped}. In such settings, causal inference proceeds on outcomes that are themselves estimated, potentially introducing additional layers of uncertainty and bias into causal effect estimation.

The overarching goal of this study is to develop a novel method for examining causal effects of an intervention on multivariate, relational outcomes derived from time-series data. The setting presents two important inferential challenges. First, dependence among outcome components, some of which may reflect underlying causal relationships, can induce spurious associations between the intervention and individual outcome components. Second, the relationships between the intervention and the multivariate outcomes may be subject to confounding. We propose a framework that addresses both challenges for valid causal inference for interdependent, derived multivariate outcomes.  

\subsection{Motivating application}

As a motivating application, we consider the causal effect of brain amyloid positivity on effective connectivity. Amyloid plaque accumulation is an important pathological feature of Alzheimer’s disease (AD)~\citep{alzheimer2025}, and {\it in vivo} amyloid burden measured using positron emission tomography (PET) is a widely used biomarker for disease diagnosis and monitoring of disease progression. Prior studies have reported associations between increased amyloid burden and alterations in brain networks and connectivity patterns~\citep{palmqvist2017earliest, moffat2022unravelling}. Notably, amyloid positivity in this study is determined based on PET imaging and is distinct from the our outcomes from functional neuroimaging data.  

We use data from the Alzheimer’s Disease Neuroimaging Initiative (ADNI) available at {\texttt adni.loni.usc.edu}, a multi-center longitudinal study established to facilitate research on AD and the development and validation of disease biomarkers. Our analysis focuses on participants from ANDI phases 3 and 4, during which both amyloid PET and resting-state functional magnetic resonance imaging (fMRI) data were collected. Resting-state fMRI captures temporal variation in blood-oxygen-level-dependent (BOLD) signals across the brain. Following standard ADNI preprocessing protocols, we extract regional time series by aggregating voxel-level signals within prespecified regions of interest (ROI). In addition to imaging data, baseline covariates commonly associated with amyloid burden and brain connectivity are available, including sex, age, years of education, and apolipoprotein E (APOE) $\varepsilon4$ carrier status.

This application raises several fundamental questions. First, how can one construct effective connectivity measures from subject-level time-series data that are interpretable and causally informative? Second, under what conditions is it valid to use such derived measures as outcomes for causal inference?

\subsection{Background and previous studies}

A key challenge in causal inference with multivariate outcomes is separating intervention effects from dependence among outcome components. Apparent effect on individual components may arise from associations induced by dependence with other components, rather than from direct causal effect of the intervention. One approach to isolating direct effects is to construct outcome components so that each represents a \textit{conditional} feature of the data-generating process, thereby isolating the contribution attributable to the intervention. In the context of time-series data, this can be achieved by defining each component conditional on the historical information from all remaining components.

When outcomes are indexed by multiple units (e.g., regions), identifying an appropriate conditioning set becomes challenging due to the high dimensionality. 
In effective connectivity literature, data-driven variable selection and dimension-reduction approaches are commonly used to construct conditioning sets~\citep{zhou2009analyzing, liao_evaluating_2010, tang2012measuring}. However, such approaches often introduce post-selection inference issues, as the resulting estimators and test statistics depend on data-adaptive model selection, violating assumptions underlying classical inferential procedures based on fixed models and hypotheses~\citep{kuchibhotla_post-selection_2022}. 
Sample-splitting methods are widely used to mitigate post-selection bias by separating model construction from inference through random partitioning of the data. While their validity is well understood under independence assumptions~\citep{tibshirani_exact_2016}, extending these guarantees to temporally autocorrelated data requires additional conditions~\citep{lunde_sample_2019, xi_post-selection_2022}.

Even when multivariate outcomes are constructed to be interpretable for downstream causal inference, it remains unclear under what conditions such quantities can be treated as observed outcomes for the purposes of causal inference. 
This issue has become increasingly salient with the growing use of algorithmically derived or predicted outcomes, including latent representations obtained using \textit{Artificial Intelligence} (AI) and deep learning,  in which outputs of prediction algorithms are directly incorporated as outcomes in causal effect estimation~\citep{modarressi2025causal}.  
Recent work has begun to address causal inference with multidimensional derived outcomes. \citet{qiu_unveiling_2024} establish identification conditions and multiple testing procedures for such settings, and \citet{du2025causal} extend this framework by introducing doubly robust estimation. These approaches, however, do not address dependence induced during the construction of the outcome measures themselves; rather, they account for dependence only at the level of estimated causal effects through multiple testing.

\subsection{Contribution and outline of the paper}
Our main contributions of this paper are as follows. 
First, we propose a unified approach that integrates the construction of multivariate, interdependent outcomes from time-series data with principled causal inference methods. We introduce two conceptual perspectives on derived outcomes and establish the distinct statistical and causal assumptions required for each perspective. 
Second, we propose an asymptotically valid sample-splitting strategy for constructing causally informative relational outcome measures from time-series data for valid downstream causal analysis of intervention effects. 
Third, we establish the large-sample properties of the proposed intervention effect estimators in settings with interdependent and unobservable outcomes, and show how existing multiple testing procedures can be incorporated to account for dependence among the estimators. 
Overall, these developments enable causal inference when the outcomes are derived or estimated and exhibit both within-outcome and model-derived dependencies.

This reminder of the paper is organized as follows. In Section~\ref{section2: notation}, we introduce notation, data structure, causal estimands, and identification conditions. Section \ref{section3: layer1} proposes effective connectivity measures and introduces their construction from time-series data using sample-splitting approaches. In Section \ref{section4: layer2}, we present procedures for estimation and hypothesis testing of intervention effects on the derived effective connectivity outcomes. Section~\ref{section5: simulation} conducts simulation studies evaluating the finite-sample performance of the proposed method and comparing them with alternative approaches. We apply our proposed methods to investigate the effect of amyloid PET positivity on effective connectivity in the human brain in Section~\ref{section6: application}. Section~\ref{section7:discussion} concludes with a discussion of the findings, limitations, and directions for future work.

\section{Notation and target estimands}\label{section2: notation}

\subsection{Notation and data structure}

Let $i$ ($i \in \{1,2,\ldots, n\} \coloneqq [1:n]$) index \textit{subjects} to which the unique intervention of interest can be exposed. Let $j$ ($j \in [1:p]$) index outcome \textit{units} (e.g., brain regions), with pairs of regions defining units of effective connectivity. We consider a binary intervention for simplicity, denoted by $Z_i \in \{0,1\}$ for subject $i$. To adjust for potential confounding, we consider a vector of baseline covariates $\mathbf{W}_i$ for subject $i$, assumed to be time invariant. The outcome of interest is multivariate and indexed by ordered pairs of outcome units.  Specifically, let $\mathbf{Y}_i=(Y_{i (1,2)}, Y_{i (1,3)}, ..., Y_{i (p,p-1)})^\top$, where, for example, $Y_{i (j_1 j_2)}$ represents the effective connectivity from region $j_1$ to region $j_2$ ($j_1, j_2 \in [1:p],~j_1 \neq j_2$). These outcomes are typically not directly observed and must instead be derived from subject-specific time-series data. We denote the derived outcome vector by $\widehat{\mathbf{Y}}_{i}$.
Let $X_{ij}^{(t)}$ denote the time-series measurement for subject $i$ at outcome unit $j$ at time $t$ ($i \in [1:n],~j \in [1:p],~t \in [1:T]$). Define $\mathbf{X}_{ij}^{[1:T]}$, or equivalently  $\widetilde{\mathbf{X}}_{ij}$, to represent the collection of time-series data for outcome unit $j$ of  subject $i$, and
let $\widetilde{\mathbf{X}}_i =\{\mathbf{X}_{ij}^{[1:T]}: j \in [1:p]\}$ denote the collection of time series across all outcome units for subject $i$.
In the motivating application, for example, $Y_{i (j_1 j_2)}$ corresponds to the true underlying effective connectivity from brain region $j_1$ to $j_2$ for subject $i$, and $X_{ij}^{(t)}$ represents the mean BOLD signal at brain region $j$ and time $t$, obtained from resting-state fMRI.

We can consider two distinct perspectives on the derived outcome $\widehat{\mathbf{Y}}_{i}$. Under the first perspective, $\widehat{\mathbf{Y}}_{i}$ represents an estimate of $\mathbf{Y}_{i}$, where $\mathbf{Y}_{i}$ is treated as a subject-level unknown latent parameter. In this view, causal effect estimation on $\mathbf{Y}_{i}$ remains the inferential goal, given that it is (asymptotically) equivalent to causal effect estimation on $\widehat{\mathbf{Y}}_{i}$ under suitable regularity conditions. 
Under the second perspective, $\widehat{\mathbf{Y}}_{i}$ is interpreted as a proxy for the unobservable outcome $\mathbf{Y}_{i}$. In our settings, $\mathbf{Y}_{i}$ is represented through the observed time series $\widetilde{\mathbf{X}}_i$. In this case, inference focuses on testing, rather than estimating, the causal effect on $\mathbf{Y}_{i}$ under appropriate causal assumptions.

\begin{figure}[!ht]
 \begin{centering}
    \begin{tikzpicture}[>=stealth, node distance=2cm]
        \tikzstyle{format} = [thin, rectangle, minimum size=5.0mm,
        inner sep=0.1pt]
        \tikzstyle{square} = [thin, rectangle, draw]
        \begin{scope}[xshift=-4.0cm, yshift = 0cm]
            \path[->, thick]
            node[format, rectangle, line width=0.3mm] (Z) {$Z$}
            node[format, right of=Z] (Y) {$\mathbf{Y}$}
            node[format, right of=Y, line width=0.3mm] (Yhat) {$\widehat{\mathbf{Y}} = f(\widetilde{\mathbf{X}})$}
            node[format, below of=Y, rectangle, line width=0.3mm] (W) {$\mathbf{W}$}
            node[format, below of=W, rectangle, line width=0.3mm, yshift = 1cm] (a) {(a)}
                
            (Z) edge[red] (Y)
            (Y) edge[-, black, dashed] (Yhat)

            (W) edge[black] (Z)
            (W) edge[black] (Y)
            (W) edge[black] (Yhat)
            ;
        \end{scope}

           \begin{scope}[xshift=4.0cm, yshift = 0cm]
            \path[->, thick]
            node[format, rectangle, line width=0.3mm] (Z) {$Z$}
            node[format, right of=Z] (Y) {$\mathbf{Y}$}
            node[format, right of=Y] (Yhat) {$\widehat{\mathbf{Y}}$}
            node[format, below of=Y, xshift = 1cm, rectangle, line width=0.3mm] (X) {$\widetilde{\mathbf{X}}$}
            node[format, below of=Y, rectangle, xshift = -1cm, line width=0.3mm] (W) {$\mathbf{W}$}
            node[format, below of=W, rectangle, xshift = 1cm, yshift=1cm, line width=0.3mm] (b) {(b)}
                
            (Z) edge[red] (Y)
            (Y) edge[black] (Yhat)
            (X) edge[red] (Yhat)
            (Y) edge[red] (X)
            (W) edge[black] (Z)
            (W) edge[black] (Y)
            (W) edge[black] (Yhat)
            (W) edge[black] (X)
            ;
        \end{scope}
    \end{tikzpicture} 
    \par\end{centering}
    \caption{\label{fig:DAG2} A graphical representation of two approaches for using $\widehat{\mathbf{Y}}$ in causal inference on $\mathbf{Y}$. In panel (a), $\widehat{\mathbf{Y}}$ is a function of $\widetilde{\mathbf{X}}$ that is closely related to the true, unobserved outcome $\mathbf{Y}$ (indicated by the dashed line); when a standard measure of $\mathbf{Y}$ exists, the effect on a unit change in $\mathbf{Y}$ can be estimated by the effect on a unit change in $\widehat{\mathbf{Y}}$. In panel (b), $\widehat{\mathbf{Y}}$ serves as a proxy for $\mathbf{Y}$, and the units of $\mathbf{Y}$ and $\widehat{\mathbf{Y}}$ are not necessarily comparable; the intervention $Z$ is assumed to affect $\widehat{\mathbf{Y}}$ through $\mathbf{Y}$ and the intermediate outcome $\widetilde{\mathbf{X}}$.}
\end{figure}

Figure~\ref{fig:DAG2} illustrates two different perspectives on using $\widehat{\mathbf{Y}}$ for causal inference on $\mathbf{Y}$. In Figure~\ref{fig:DAG2}(a), $\widehat{\mathbf{Y}}$ is defined as a function of $\widetilde{\mathbf{X}}$ and is closely related to the true, unobserved outcome $\mathbf{Y}$ (denoted by the dashed line), while allowing for confounding between $\mathbf{Y}$ and $\widehat{\mathbf{Y}}$ through observed baseline covariates $\mathbf{W}$. This perspective is appropriate when a specific, and presumably unique, measure of the true outcome $\mathbf{Y}$ exists, such that the causal effect corresponding to a unit change in $\mathbf{Y}$ can be approximated by the effect on a unit change in $\widehat{\mathbf{Y}}$. 
Figure~\ref{fig:DAG2}(b) illustrates the second perspective, in which $\widehat{\mathbf{Y}}$ is treated as a proxy for the unobservable outcome $\mathbf{Y}$. This perspective is appropriate when the target outcome $\mathbf{Y}$ is an abstract concept that may be measured in multiple ways, but for which the intervention is expected to affect the proxy $\widehat{\mathbf{Y}}$ through $\mathbf{Y}$ and the intermediate outcome $\widetilde{\mathbf{X}}$. Although $\mathbf{Y}$ may influence $\widehat{\mathbf{Y}}$ through multiple pathways, the focus here is on the pathway mediated by the observed time series $\widetilde{\mathbf{X}}$. In this setting, the units of $\mathbf{Y}$ (if such units exist) and $\widehat{\mathbf{Y}}$ are not necessarily comparable. Under this perspective, inference is directed toward testing the existence of a causal effect of $Z$ on $\mathbf{Y}$ through testing for an effect of $Z$ on $\widehat{\mathbf{Y}}$.
We elaborate on the conditions required for each perspective in Section~\ref{ssec:identification}.

\subsection{Potential outcomes and causal estimands}

We use the potential outcomes framework to define the target estimands and formalize the causal null hypothesis of interest~\citep{neyman1923application, rubin1974estimating}. For each outcome component, let $Y_{i (j_1 j_2)}(z)$ denote the potential outcome under intervention level $z \in \{0,1\}$. The vector $\mathbf{Y}_i(z)=(Y_{i (1,2)}(z),..., Y_{i (p,p-1)}(z))^\top$ includes potential outcomes for subject $i$ across ordered pairs of outcome units. We examine the difference between the potential outcomes $\mathbf{Y}_i(0)$ and $\mathbf{Y}_i(1)$ using post-intervention time-series data $\widetilde{\mathbf{X}}_i$ and the estimated or derived outcomes $\widehat{\mathbf{Y}}_i$. We introduce analogous potential outcome notation for the observed time series and derived outcomes, denoting them by $\widetilde{\mathbf{X}}_i(z)$ and $\widehat{\mathbf{Y}}_i(z)$, respectively, for $z \in \{0, 1\}$. 

Our focus is on the causal effect of the intervention on individual components of the outcome $Y_{(j_{1} j_{2})}~(j_1, j_2  \in [1:p];~j_1 \neq j_2$). These effects can be represented by differences in average potential outcomes,
\begin{eqnarray} \label{eq:tau}
\tau_{(j_{1} j_{2})} = \mathbb{E}_{p}[Y_{ (j_{1} j_{2})} (1) -  Y_{(j_{1} j_{2})} (0)],
\end{eqnarray}
where the expectation $\mathbb{E}_{p}(\cdot)$ is taken over subjects in the population. Let $\boldsymbol{\tau} = (\tau_{(1,2)},...,\tau_{(p,p-1)})^\top= \mathbb{E}_{p}[\mathbf{Y}(1) - \mathbf{Y}(0)]$ denote the vector of componentwise causal effects. Because the true outcomes $\mathbf{Y}_{i}$ are not directly observed,  inference under either of the two perspectives described earlier directly targets $\boldsymbol{\tau}^{*} : = \mathbb{E}_{p}[\widehat{\mathbf{Y}}(1) - \widehat{\mathbf{Y}}(0)]$, which is the corresponding causal estimand defined in terms of the derived outcomes. 

Under the second perspective on $\widehat{\mathbf{Y}}_{i}$, the inferential focus shifts from estimation of the causal effect in~\eqref{eq:tau} to hypothesis testing. Specifically, for $j_1, j_2 \in [1:p];~j_1 \neq j_2$, we consider the collection of null hypotheses
\begin{eqnarray} \label{eq:null}
 H_{0, (j_1 j_2)}: \tau_{(j_1 j_2)}=0.
\end{eqnarray}
\noindent Similarly, for the derived outcomes we define the corresponding class of null hypotheses $\{ H^{*}_{0, (j_1 j_2)}: \tau^{*}_{(j_1 j_2)}=0$; $j_1, j_2 \in [1:p];~j_1 \neq j_2 \}$.

\subsection{Identification}
\label{ssec:identification}

In this section, we first establish the causal assumptions required under each perspective and distinguish those that are common from those that are perspective specific. For $z \in \{0,1\}$, $i \in [1:n]$, we impose the following common assumptions.

\begin{assumption}[Consistency] \label{as:assumption1} If $Z_i = z \in \{0,1\}$,  
$\widetilde{\mathbf{X}}_i = \widetilde{\mathbf{X}}_i(z),
\mathbf{Y}_i = \mathbf{Y}_i(z),
\widehat{\mathbf{Y}}_i = \widehat{\mathbf{Y}}_i(z)$.
\end{assumption}

\begin{assumption}[Stable Unit Treatment Value Assumption (SUTVA)~\citep{rubin1974estimating}] \label{as:assumption2}
    There are no hidden variations in the intervention and no interference between subjects.
\end{assumption}

\begin{assumption}[Positivity] \label{as:assumption3} $Pr(Z_i=z \mid \mathbf{W}_i = \boldsymbol{w})>0$ for $\boldsymbol{w} \in \Omega_{W}$.
\end{assumption}

\begin{assumption}[Ignorability] \label{as:assumption4}
$Z \indep \widetilde{\mathbf{X}}(0), \widetilde{\mathbf{X}}(1) \mid \mathbf{W}$.
\end{assumption}
\noindent Consistency ensures that the observed time series, the unobservable outcomes, and the derived outcomes coincide with their respective potential outcome counterparts. The SUTVA assumes a single version of the intervention and no interference between subjects. The positivity and ignorability assumptions require sufficient overlap in the propensity scores and unconfoundedness of the intervention, conditional on the baseline covariates $\mathbf{W}_i$. Under either perspective, $\widehat{\mathbf{Y}}_i(z)$ is a function of $\widetilde{\mathbf{X}}_i(z)$ and $\mathbf{W}_i$. Therefore, Assumption~\ref{as:assumption4} implies unconfoundedness of the intervention with respect to the potential derived outcomes, that is, $Z \indep \widehat{\mathbf{Y}}(0), \widehat{\mathbf{Y}}(1) \mid \mathbf{W}$.

Because the outcomes $\mathbf{Y}_{i}$ are not directly observed, additional conditions are required to relate $\{\mathbf{Y}_i(1), \mathbf{Y}_i(0)\}$ to $\{\widehat{\mathbf{Y}}_i(1), \widehat{\mathbf{Y}}_i(0) \}$. Under the first perspective, where we treat $\widehat{\mathbf{Y}}_{i}$ as an estimator of $\mathbf{Y}_{i}$ (equivalently, $\widehat{\mathbf{Y}}_{i}(z)$ for $\mathbf{Y}_{i}(z)$ for $Z_{i} = z$ under consistency). Then we define the bias in the subject-level outcome measures by:
\[
\boldsymbol{\Delta}_{i}(z)=(\Delta_{i(1,2)}(z), ..., \Delta_{i(p,p-1)}(z))^\top=\mathbb{E}_{i}[\widehat{\mathbf{Y}}_i(z) \mid \widetilde{\mathbf{X}}_i(z), \mathbf{W}_i] - \mathbf{Y}_{i}(z),
\]
where $\mathbb{E}_{i}[\cdot \mid \widetilde{\mathbf{X}}_i(z), \mathbf{W}_i]$ denote the subject-specific expectation given $\widetilde{\mathbf{X}}_{i}(z)$ and $\mathbf{W}_i$. The following condition  restricts the asymptotic behavior of the bias in $\widehat{\mathbf{Y}}_i(z)$ as the time-series dimension $T$ increases. 

\begin{condition}[Asymptotic Unbiasedness] \label{con:condition1}
For $z \in \{0,1\}$, $\max \limits_{j_1, j_2=[1:p];~j_1 \neq j_2} \mathbb{E}_{p} \{\Delta_{(j_1 j_2)}(z)\}=o_{p}(1)$, as $ T \rightarrow \infty$.
\end{condition}
\noindent 
Condition~\ref{con:condition1} allows us to treat $\widehat{\mathbf{Y}}_{i}$ as an estimator of $\mathbf{Y}_{i}$ and is a key requirement for the identification result stated in the following theorem.
\begin{theorem} \label{thm:theorem1}
    Under Assumptions \ref{as:assumption1}--\ref{as:assumption4} and Condition \ref{con:condition1}, the estimand $\boldsymbol{\tau}^{*}$ is consistent for $\boldsymbol{\tau}$ as $T \rightarrow \infty$.
\end{theorem}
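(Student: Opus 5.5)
The argument is a decomposition of $\boldsymbol{\tau}^{*}$ into the true effect plus bias terms. Fix an ordered pair $(j_1,j_2)$. By definition of $\Delta_{i(j_1j_2)}(z)$ we have, for each subject and each $z\in\{0,1\}$,
\[
\mathbb{E}_{i}[\widehat{Y}_{i(j_1j_2)}(z)\mid \widetilde{\mathbf{X}}_i(z),\mathbf{W}_i] = Y_{i(j_1j_2)}(z) + \Delta_{i(j_1j_2)}(z).
\]
We take the population expectation $\mathbb{E}_p$ of both sides and apply the law of iterated expectations. Here the outer population expectation averages over the subject-level randomness in $(\widetilde{\mathbf{X}}_i(z),\mathbf{W}_i)$, and the inner expectation averages over the within-subject construction of $\widehat{\mathbf{Y}}$. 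This gives
\[
\mathbb{E}_p[\widehat{Y}_{(j_1j_2)}(z)] = \mathbb{E}_p[Y_{(j_1j_2)}(z)] + \mathbb{E}_p[\Delta_{(j_1j_2)}(z)].
\]
Subtracting the $z=0$ identity from the $z=1$ identity yields
\[
\tau^{*}_{(j_1j_2)}-\tau_{(j_1j_2)}=\mathbb{E}_p[\Delta_{(j_1j_2)}(1)]-\mathbb{E}_p[\Delta_{(j_1j_2)}(0)].
\]

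The right-hand side is controlled by Condition~\ref{con:condition1}. Taking absolute values and the maximum over pairs,
\[
\max_{j_1\neq j_2}|\tau^{*}_{(j_1j_2)}-\tau_{(j_1j_2)}|\le \sum_{z\in\{0,1\}}\max_{j_1\neq j_2}|\mathbb{E}_p\{\Delta_{(j_1j_2)}(z)\}| = o_p(1)
\]
as $T\to\infty$. Since the number of components $p(p-1)$ is fixed as $T$ grows, this sup-norm convergence implies $\boldsymbol{\tau}^{*}\to\boldsymbol{\tau}$ componentwise and in any norm, which is the claimed consistency.

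The role of Assumptions~\ref{as:assumption1}--\ref{as:assumption4} needs to be made explicit. The decomposition above is purely about potential outcomes, so the assumptions are not needed for the algebra. They matter because they make $\boldsymbol{\tau}^{*}$ a meaningful, identified estimand.
\begin{itemize}
\item Consistency and SUTVA ensure that $\widehat{\mathbf{Y}}_i(z)$, $\mathbf{Y}_i(z)$ and $\widetilde{\mathbf{X}}_i(z)$ are well defined per subject and coincide with the observed quantities.
\item Ignorability (together with $\widehat{\mathbf{Y}}(z)$ being a function of $\widetilde{\mathbf{X}}(z)$ and $\mathbf{W}$) and positivity give $\mathbb{E}_p[\widehat{\mathbf{Y}}(z)]=\mathbb{E}_p\{\mathbb{E}[\widehat{\mathbf{Y}}\mid Z=z,\mathbf{W}]\}$.
\end{itemize}
Consequently $\boldsymbol{\tau}$ is the limit of an identified quantity.

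The main obstacle is interpretive rather than technical. The step most in need of care is justifying the iterated-expectation step and reading Condition~\ref{con:condition1} so that it gives a two-sided bound. First, one needs integrability of $\widehat{\mathbf{Y}}(z)$ and $\mathbf{Y}(z)$ so that the expectations exist and the tower property applies. Second, the condition is stated as a maximum of $\mathbb{E}_p\{\Delta(z)\}$ rather than of its absolute value. I would interpret it as bounding $\max|\mathbb{E}_p\{\Delta(z)\}|$, or equivalently apply it to both $\Delta$ and $-\Delta$, so that the difference of the two bias terms is controlled. Finally, because these bias means are deterministic population quantities, $o_p(1)$ here reduces to ordinary $o(1)$ convergence.
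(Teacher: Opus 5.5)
Your proof is correct and follows essentially the same route as the paper. Both arguments use the tower property to write $\mathbb{E}_p[\widehat{\mathbf{Y}}(z)]$ as $\mathbb{E}_p[\mathbf{Y}(z)+\boldsymbol{\Delta}(z)]$ and then apply Condition~\ref{con:condition1} to $\mathbb{E}_p[\boldsymbol{\Delta}(1)-\boldsymbol{\Delta}(0)]$. The paper merely conditions on $\mathbf{W}$ first and cites Assumption~\ref{as:assumption4}, which, as you rightly note, is not needed for this algebra; it also relies implicitly on the two-sided reading of Condition~\ref{con:condition1} that you make explicit.
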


On the other hand, when $\widehat{\mathbf{Y}}_{i}$ is interpreted as a proxy for $\mathbf{Y}_{i}$ rather than an estimator, the following assumptions ensure that hypothesis tests based on the derived outcomes remain valid.

\begin{Aassumption}[Relevance] \label{as:assumptionA1}
     The unobservable outcomes of interest are associated with the derived outcomes.
\end{Aassumption}

\begin{Aassumption}[Exclusion Restriction] \label{as:assumptionA2}
    The intervention can affect the derived outcomes $\widehat{\mathbf{Y}}_i$, only through the unobservable outcome $\mathbf{Y}_i$. 
\end{Aassumption}

\begin{Aassumption}[Ignorability]
    $Z \indep \mathbf{Y}(0), \mathbf{Y}(1) \mid \mathbf{W}$.
\label{as:assumptionA3}

\end{Aassumption}

\noindent Relevance ensures that any intervention on unobservable outcomes is reflected in the derived outcomes. The exclusion restriction requires that that all effects of the intervention on $\widehat{\mathbf{Y}}$ operate through $\mathbf{Y}$, ruling out direct pathways. This assumption is conceptually related to the exclusion restriction assumptions in identifying latent treatment effects~\citep{stoetzer2025causal} and in instrumental variable models, although in our setting $\widehat{\mathbf{Y}}$ serves as a proxy for the intermediate unobservable $\mathbf{Y}$. Finally, the ignorability assumption requires that, conditional on $\mathbf{W}$, the intervention is unconfounded with the potential unobservable outcomes.

\begin{theorem} \label{thm:theorem2}
    Under Assumptions \ref{as:assumption1}--\ref{as:assumption4}, \ref{as:assumptionA1}--\ref{as:assumptionA3}, a test of $H^{*}_{0, (j_{1} j_{2})}: \tau^{*}_{(j_{1} j_{2})} = 0$ provides a valid test of $H_{0, (j_{1} j_{2})}: \tau_{(j_{1} j_{2})} = 0$ for $j_{1}, j_{2} \in [1:p];~j_{1} \neq j_{2}$.
\end{theorem}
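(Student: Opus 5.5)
To prove Theorem~\ref{thm:theorem2}, I will show that the null $H_{0,(j_1j_2)}$ on the unobservable outcome implies the null $H^{*}_{0,(j_1j_2)}$ on the derived outcome. Then any test of $H^{*}_{0,(j_1j_2)}$ with (asymptotic) level $\alpha$ rejects a true $H_{0,(j_1j_2)}$ with probability at most $\alpha$. Relevance (Assumption~\ref{as:assumptionA1}) is used only for the converse direction: an effect on $\mathbf{Y}$ propagates to $\widehat{\mathbf{Y}}$, so the test has power against the alternative.

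\textbf{Structural representation.} First I will formalize the exclusion restriction (Assumption~\ref{as:assumptionA2}) in the nested potential-outcomes form suggested by Figure~\ref{fig:DAG2}(b). There is a map $g$ such that $\widehat{\mathbf{Y}}_i(z) = g(\mathbf{Y}_i(z), \mathbf{W}_i, \boldsymbol{\epsilon}_i)$, where $\boldsymbol{\epsilon}_i$ collects the time-series noise generating $\widetilde{\mathbf{X}}_i$ given $\mathbf{Y}_i$. The map $g$ does not depend on $z$, and $\boldsymbol{\epsilon}_i$ is independent of $(Z_i,\mathbf{Y}_i(0),\mathbf{Y}_i(1))$ given $\mathbf{W}_i$. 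Consistency (Assumption~\ref{as:assumption1}) and SUTVA (Assumption~\ref{as:assumption2}) make this representation well defined subject by subject.

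\textbf{Null transfer.} Under $H_{0,(j_1j_2)}$, I need $\tau^{*}_{(j_1j_2)}=0$. Using $g$, write
\[
\tau^{*}_{(j_1j_2)} = \mathbb{E}_p\big[g_{(j_1j_2)}(\mathbf{Y}(1),\mathbf{W},\boldsymbol{\epsilon}) - g_{(j_1j_2)}(\mathbf{Y}(0),\mathbf{W},\boldsymbol{\epsilon})\big].
\]
This vanishes if the component-$(j_1j_2)$ derived outcome depends on $\mathbf{Y}$ only through $Y_{(j_1j_2)}$, given $\mathbf{W}$ and $\boldsymbol{\epsilon}$, and one of the following holds:
\begin{itemize}
\item a sharp version of the null, $Y_{(j_1j_2)}(1)=Y_{(j_1j_2)}(0)$;
\item $g_{(j_1j_2)}$ is affine in $Y_{(j_1j_2)}$ conditional on $\mathbf{W}$.
\end{itemize}
I will state this componentwise exclusion as the reading of Assumption~\ref{as:assumptionA2} for each component. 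It is also what the conditional construction of Section~\ref{section3: layer1} is designed to deliver.

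\textbf{Identification of $\tau^{*}$.} Next I will note that $\tau^{*}$ is identified from observed data by standard IPW or outcome-regression arguments. The three ingredients are:
\begin{itemize}
\item Assumption~\ref{as:assumption4}, which yields $Z\indep\widehat{\mathbf{Y}}(0),\widehat{\mathbf{Y}}(1)\mid\mathbf{W}$, as already remarked after the common assumptions;
\item positivity (Assumption~\ref{as:assumption3});
\item consistency.
\end{itemize}
Hence a valid test of $H^{*}_{0,(j_1j_2)}$ is constructible from data. Assumption~\ref{as:assumptionA3} ensures that the $\mathbf{Y}$-level contrast in the null transfer step is itself a causal contrast, not a confounded one. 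Together with independence of $\boldsymbol{\epsilon}$, it lets me average over $\mathbf{Y}(z)\mid\mathbf{W}$ within strata and then over $\mathbf{W}$.

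\textbf{Main obstacle.} The hardest step is the null transfer. A weak null $\mathbb{E}_p[Y(1)-Y(0)]=0$ does not in general imply $\mathbb{E}_p[g(Y(1))-g(Y(0))]=0$ for nonlinear $g$, and cross-component dependence of $g$ would break it further. My first attempt will interpret $H_{0}$ under the proxy perspective as the sharp or distributional null, $\mathbf{Y}(1)\overset{d}{=}\mathbf{Y}(0)$ given $\mathbf{W}$ for that component. This is natural since units of $\mathbf{Y}$ are not comparable with those of $\widehat{\mathbf{Y}}$. Under that null, equality in distribution of $g_{(j_1j_2)}(Y_{(j_1j_2)}(z),\mathbf{W},\boldsymbol{\epsilon})$ across $z$ follows immediately. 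If that interpretation is unavailable, I will fall back on assuming conditional linearity of $\mathbb{E}[\widehat{Y}_{(j_1j_2)}\mid Y_{(j_1j_2)},\mathbf{W}]$.
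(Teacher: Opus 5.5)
The paper gives no proof of Theorem~\ref{thm:theorem2}. Its only justification is the verbal gloss after Assumptions~\ref{as:assumptionA1}--\ref{as:assumptionA3}: there is no direct path from $Z$ to $\widehat{\mathbf{Y}}$, so no effect on $\mathbf{Y}$ means no effect on $\widehat{\mathbf{Y}}$, and relevance makes effects detectable. Your skeleton is the natural formalization of that intent: prove $H_{0,(j_1j_2)}\Rightarrow H^{*}_{0,(j_1j_2)}$ so the level is inherited, use relevance only for power, and identify $\tau^{*}$ via Assumption~\ref{as:assumption4}, positivity and consistency.

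Your diagnosis of the main obstacle is also right, and it is a defect of the statement rather than of your plan. With $H_{0}$ the weak mean null in~\eqref{eq:null} and exclusion stated for the whole vector $\mathbf{Y}$, the conclusion can fail. Take $Z$ randomized, $Y(0)\sim N(1,1)$, and $Y(1)=Y(0)+U$ with $U=\pm 1$ equiprobable and independent of everything. Let $\widetilde{\mathbf{X}}(z)=Y(z)$ and $\widehat{Y}(z)=Y(z)^{2}$. Every listed assumption holds and $\tau=0$, yet $\tau^{*}=\mathbb{E}[2Y(0)U+U^{2}]=1$. A consistent test of $H^{*}_{0}$ therefore rejects the true $H_{0}$ with probability tending to one. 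Some strengthening is unavoidable, and your sharp/distributional-null branch, combined with componentwise exclusion, is sound.

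Three points need repair.

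\emph{The affine fallback is not sufficient as stated.} Suppose the slope of $g_{(j_1j_2)}$ in $Y_{(j_1j_2)}$ (or of $\mathbb{E}[\widehat{Y}_{(j_1j_2)}\mid Y_{(j_1j_2)},\mathbf{W}]$) may vary with $\mathbf{W}$, say $b(\mathbf{W})$. Then $\tau^{*}_{(j_1j_2)}=\mathbb{E}_{p}[\,b(\mathbf{W})\,\mathbb{E}\{Y_{(j_1j_2)}(1)-Y_{(j_1j_2)}(0)\mid\mathbf{W}\}\,]$. This need not vanish when only the marginal $\tau_{(j_1j_2)}$ is zero, since effects of opposite sign in different strata get reweighted by $b$. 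You need either a slope constant in $\mathbf{W}$ (the intercept may still depend on $\mathbf{W}$ and $\boldsymbol{\epsilon}$), or the conditional null $\mathbb{E}[Y_{(j_1j_2)}(1)-Y_{(j_1j_2)}(0)\mid\mathbf{W}]=0$.

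\emph{State your extra hypotheses explicitly.} Your argument runs on hypotheses beyond those listed:
\begin{itemize}
\item The structural representation with a common $\boldsymbol{\epsilon}_i$ across arms, and $\boldsymbol{\epsilon}\indep(Z,\mathbf{Y}(0),\mathbf{Y}(1))\mid\mathbf{W}$, do not follow from consistency and SUTVA.
\item Componentwise exclusion is strictly stronger than Assumption~\ref{as:assumptionA2}. It is also not delivered by Section~\ref{section3: layer1}: if $Y_{(j_1j_2)}$ is read as the directed coupling itself, the population conditional Granger quantity for $(j_1,j_2)$ generally depends on other transition coefficients too (e.g.\ through the variance of $X_{\cdot j_1}$ not explained by the conditioning lags).
\end{itemize}

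\emph{Assumption~\ref{as:assumptionA3} does no work in your argument.} The null transfer is a statement about potential outcomes, and identifying $\tau^{*}$ needs only Assumption~\ref{as:assumption4}. Say so rather than assigning it a role.
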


\noindent Given these assumptions, Section~\ref{section3: layer1} focuses on obtaining subject-level estimates $\widehat{\mathbf{Y}}_{i}$ from the time-series data $\widetilde{\mathbf{X}}_{i}$ for each subject $i$. Section~\ref{section4: layer2} then proceeds to across-subject inference to examine causal effects by combining data from both the intervention and control groups.

\section{Deriving multivariate interdependent outcomes} \label{section3: layer1}

To distinguish changes in each outcome component that are directly due to the external intervention from those induced indirectly through its effects on other components, we construct a derived outcome measure, denoted by $\widehat{\mathbf{Y}}_{i}$, from the observed time-series data $\widetilde{\mathbf{X}}_{i}$. 
In this section, we omit the subscript $i$, the subject index, for simplicity.

\subsection{Granger causality}

A time series $\widetilde{\mathbf{X}}_{\cdot j}$ is said to \textit{Granger-cause} another series $\widetilde{\mathbf{X}}_{\cdot k}$ if past values of $\widetilde{\mathbf{X}}_{\cdot j}$ contain information that improves prediction of the future values of $\widetilde{\mathbf{X}}_{\cdot k}$~\citep{granger_investigating_1969}. Although Granger causality reflects predictive relationships rather than true causation, we use the estimated Granger causality as a derived outcome measure for brain effective connectivity, given their widespread use in neuroscience. To account for the influence of other brain regions that may confound the relationship between regions $j$ and $k$, we employ \textit{conditional} Granger causality, which conditions on the time series of additional variables~\citep{geweke_measures_1984, liao_evaluating_2010}. This approach allows the derived outcomes to better capture directed dependencies between regions while accounting for the broader multivariate system. 

Figure~\ref{fig:DAG3} illustrates the role of conditional Granger causality when examining the causal effect of an external intervention on pairwise effective connectivity. The figure shows a case where the effective connectivity measure for one pair, if not carefully constructed, may be influenced by the intervention’s effects on other pairs in a \textit{non-causal} way. In Figure~\ref{fig:DAG3}(a), under control (i.e., $Z=0$), the time series $\widetilde{\mathbf{X}}_{\cdot 3}$ Granger-causes $\widetilde{\mathbf{X}}_{\cdot 1}$. In Figure~\ref{fig:DAG3}(b), under intervention (i.e., $Z=1$), the strength of this existing Granger causality from $\widetilde{\mathbf{X}}_{\cdot 3}$ to $\widetilde{\mathbf{X}}_{\cdot 1}$ increases, and an additional Granger-causal effect from $\widetilde{\mathbf{X}}_{\cdot 3}$ to $\widetilde{\mathbf{X}}_{\cdot 1}$ emerges.
When Granger causality is used as the measure of effective connectivity, such changes would lead to detected effects on $Y_{(3,1)}$ and $Y_{(3,2)}$. One might also detect apparent effects on $Y_{(1,2)}$ or $Y_{(2,1)}$, even though there is no change in the Granger-causality relationship between  $\widetilde{\mathbf{X}}_{\cdot 1}$ and $\widetilde{\mathbf{X}}_{\cdot 2}$, simply because their relationships with $\widetilde{\mathbf{X}}_{\cdot 3}$ have changed. 
This example motivates defining $\widehat{\mathbf{Y}}_{(j,k)}$ using conditional Granger causality, so that the resulting measure for pair $(j,k)$ is not affected by the time series $\widetilde{\mathbf{X}}_{\cdot h}$ for any $h$ ($h \neq j,k$).  

\begin{figure}
\centering
\begin{tikzpicture}[>=stealth, node distance=1cm]
  \tikzstyle{format} = [thin, rectangle, minimum size=5.0mm, inner sep=0.1pt]
  \tikzstyle{square} = [thin, rectangle, draw]

  \begin{scope}[xshift=-4.0cm, yshift=0cm]
    \node[format] (z) {$Z=0$};

    \node[right=of z, draw, inner sep=3pt] (x) {%
      \begin{tikzpicture}[scale=0.8]
        \node[format] (x3) at (0,0) {$\widetilde{\mathbf{X}}_{\cdot 3}(0)$};
        \node[format] (x1) at (-1, -1.5) {$\widetilde{\mathbf{X}}_{\cdot 1}(0)$};
        \node[format] (x2) at ( 1, -1.5) {$\widetilde{\mathbf{X}}_{\cdot 2}(0)$};
        \draw[->, ForestGreen, thick] (x3) -- (x1);

      \end{tikzpicture}
    };

    \node[format, below of=z, rectangle, line width=0.3mm, xshift = 1cm, yshift = -1cm] (a) {(a)};

  \end{scope}

  \begin{scope}[xshift=4.0cm, yshift=0cm]
    \node[format] (z) {$Z=1$};

    \node[right=of z, draw, inner sep=3pt] (x) {%
      \begin{tikzpicture}[scale=0.8]
        \node[format] (x3) at (0,0) {$\widetilde{\mathbf{X}}_{\cdot 3}(1)$};
        \node[format] (x1) at (-1, -1.5) {$\widetilde{\mathbf{X}}_{\cdot 1}(1)$};
        \node[format] (x2) at ( 1, -1.5) {$\widetilde{\mathbf{X}}_{\cdot 2}(1)$};
        \draw[->, ForestGreen, thick, line width=0.75mm] (x3) -- (x1);
        \draw[->, ForestGreen, thick] (x3) -- (x2);
      \end{tikzpicture}
    };

    \node[format, below of=z, rectangle, line width=0.3mm, xshift = 1cm, yshift = -1cm] (b) {(b)};
  
  \end{scope}
\end{tikzpicture}
\caption{\label{fig:DAG3} Representations of Granger causality relationships under (a) control and (b) intervention, where the presence of a direct arrow indicates Granger causality between two time series and the arrow width reflects the strength of that relationship.}
\end{figure}
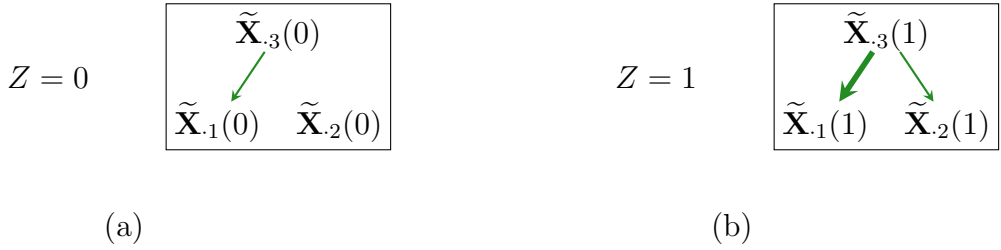

\begin{remark}
The setting illustrated in Figure~\ref{fig:DAG3}(b) differs from a standard causal mediation framework (e.g., a causal pathway 
$Z \rightarrow Y_{(1,3)} \rightarrow Y_{(1,2)}$) in that the spurious association 
between $\widetilde{\mathbf{X}}_{\cdot 1}$ and $\widetilde{\mathbf{X}}_{\cdot 2}$ 
is not necessarily causally induced. Accordingly, the notion of ``direct'' effect in this work does not correspond to the effect obtained by blocking an indirect causal path between $Z$ and $Y_{(1,2)}$. Instead, it refers to the effect obtained by blocking any paths between $Z$ and $Y_{(1,2)}$ that operate through other components $Y_{(j, k)}$ with $(j, k) \neq (1,2)$.
\end{remark}

\subsection{Deriving conditional Granger causality}\label{subsec: condGranger}

To derive conditional Granger causality, we employ vector autoregressive (VAR) models with $\widetilde{\mathbf{X}}$ as input, under the assumption of weak stationarity; that is, the mean and covariance matrix are time-invariant. 
To estimate the conditional Granger causality from $\widetilde{\mathbf{X}}_{\cdot j_{1}}$ to $\widetilde{\mathbf{X}}_{\cdot j_{2}}$, we construct two VAR models of order $r$ for $\widetilde{\mathbf{\mathbf{X}}}_{\cdot j_{2}}$. The full model includes lagged values of $\widetilde{\mathbf{X}}_{\cdot j_{1}}$ as predictors, whereas the reduced model excludes them. To account for the influence of other time series, both models may additionally adjust for lagged values of $\widetilde{\mathbf{X}}_{\cdot k}$ for $k \in \mathcal{J}$, where $\mathcal{J} \subseteq [1:p] \setminus \{j_{1}, j_{2}\}$. We denote this collection by $\widetilde{\mathbf{X}}_{\cdot \mathcal{J}}$, and let $\ell = |\mathcal{J}|$ be the number of conditioned time series. The models may also incorporate baseline covariates $\mathbf{W}$ to account for measurement error or other sources of systematic variation.

The coefficients in the VAR models are estimated by ordinary least squares under the assumption that the error terms are multivariate normal with mean zero and a time-invariant covariance matrix.  
Using the estimated coefficients from the full and reduced models, we construct the following test statistic for the null hypothesis that $\widetilde{\mathbf{X}}_{\cdot j_1}$ does not Granger-cause $\widetilde{\mathbf{X}}_{\cdot j_2}$ conditioning on $\widetilde{\mathbf{X}}_{\cdot \mathcal{J}}$.
\begin{equation} \label{eq:Fhat}
    \widehat F_{(j_1 j_2)} = \frac{\sum_{t=r+1}^{T} (\widehat{\varepsilon}_{j_2}^{(t)})^2-\sum_{t=r+1}^{T} (\widehat{\varepsilon}'^{(t)}_{j_2})^2}{r} \left\{ \frac{\sum_{t=r+1}^{T}(\widehat{\varepsilon}'^{(t)}_{j_2})^2}{(T-r)-(\ell+2)r-1} \right\}^{-1},
\end{equation}
\noindent where $\widehat{\varepsilon}_{j_2}^{(t)}$ and $\widehat{\varepsilon}'^{(t)}_{j_2}$ denote the sample residuals from modeling $X_{\cdot j_2}^{(t)}$ in the reduced and full models, respectively. The test statistic asymptotically follows $F$-distribution with degrees of freedom $(r,(T-r)-(\ell+2)r-1)$.  Although the residuals are indexed by $t$, their variances are assumed to be time-invariant.

From the first perspective, illustrated in Figure~\ref{fig:DAG2}(a), the unobserved outcome $\mathbf{Y}$ may be viewed as the collection of $\{F_{(j_{1} j_{2})}: j_{1}, j_{2} \in [1:p],~j_{1} \neq j_{2}\}$, where $F_{(j_1 j_2)}$ denotes the population analogue of $\widehat{F}_{(j_{1} j_{2})}$ obtained by replacing the sample residual sums of squares with their true population quantities. Then $\{\widehat{F}_{(j_{1} j_{2})}: j_{1}, j_{2} \in [1:p],~j_{1} \neq j_{2}\}$ represents the estimator $\widehat{\mathbf{Y}}$ for each subject $i$. 
From the second perspective, shown in Figure~\ref{fig:DAG2}(b), the unobserved outcome $\mathbf{Y}$ implies the underlying effective connectivity, while $\{\widehat{F}_{(j_{1} j_{2})}: j_{1}, j_{2} \in [1:p],~j_{1} \neq j_{2}\}$ serves as the derived outcome vector $\widehat{\mathbf{Y}}$ for each subject $i$.

\subsection{Sample-splitting approach}

In our setting with moderately large $p$ and $T$, the conditioning sets (e.g., $\mathcal{J}$) and the VAR order (e.g., $r$) may be chosen using a data-driven procedure for each subject. 
These choices affect both the null hypothesis and the null distribution of the statistic $\widehat{F}_{(j_{1}j_{2})}$ in~\eqref{eq:Fhat}. If $\widehat{F}_{(j_{1}j_{2})}$ is constructed after selecting $\mathcal{J}$ and $r$ using the same data, a selective inference problem arises, leading to anti-conservative inference.

To address this issue, we adopt a sample-splitting approach applied to time-series data for constructing $\widehat{F}_{(j_{1}j_{2})}$ for each subject.
For simplicity, suppose that each subject's time series is divided into two equal parts: the first half and the second half, with $T_0 = T/2$.
Let $\mathcal{D}_1$ denote the first part, $\mathcal{D}_{1} := \{X_{\cdot j}^{(t)}: j = [1:p], ~t= [1:T_0]\}$, which can be used for model selection (i.e., selecting $\mathcal{J}$ and $r$).
Let $\mathcal{D}_{2}$ denote the second part, $\mathcal{D}_{2} := \{ X_{\cdot j}^{(t)}: j =[1:p], ~t=[(T_0+1):T]\}$, which can be used to compute $\widehat{F}_{(j_{1}j_{2})}$ in~\eqref{eq:Fhat}.  Unlike the i.i.d. setting, time-series data must be split into contiguous time blocks rather than randomly sampled time indices, so that the serial dependence structure learned from $\mathcal{D}_{1}$ can be validly applied to $\mathcal{D}_{2}$. 

Let $\widehat{F}_{(j_1 j_2)}(\mathcal{D}_2) \mid \widehat{m}(\mathcal{D}_1)$ denote the statistic in~\eqref{eq:Fhat}, computed using $\mathcal{D}_{2}$ under the model specification $\widehat{m}(\mathcal{D}_1)$ selected from $\mathcal{D}_{1}$. 
Next, we establish the \textit{asymptotic} validity of this sample-splitting procedure for $\widehat{F}_{(j_1 j_2)}(\mathcal{D}_2) \mid \widehat{m}(\mathcal{D}_1)$, even when $\mathcal{D}_{1}$ and $\mathcal{D}_{2}$ are statistially dependent.

\begin{definition}[Asymptotic validity of sample-splitting~\citep{lunde_sample_2019}] \label{df:definition1}
Let $\widetilde{\mathcal{D}}_2$ denote an independent copy of $\mathcal{D}_2$ that is also independent of $\mathcal{D}_1$. 
The sample-splitting is said to be asymptotically valid in probability under a metric $d(\cdot, \cdot)$ measuring the distance between conditional distributions (e.g., bounded Lipschitz metric) if
$$
d \left( \mathcal{L}(\widehat{F}_{(j_1 j_2)}(\mathcal{D}_2) \mid \widehat{m}(\mathcal{D}_1)), \, \mathcal{L}(\widehat{F}_{(j_1 j_2)}(\widetilde{\mathcal{D}}_2) \mid \widehat{m}(\mathcal{D}_1)) \right) \overset{p}{\longrightarrow} 0 \text{ as } T \rightarrow \infty,
$$ 
where $\mathcal{L}(\cdot)$ denotes the distribution of a random variable. 
\end{definition}

Sufficient conditions for asymptotic validity of the sample-splitting procedure include (i) tightness and (ii) asymptotic deletion stability~\citep{lunde_sample_2019}.
For (i), under the null hypothesis, $\widehat{F}_{(j_1 j_2)}(\mathcal{D}_2) \mid \widehat{m}(\mathcal{D}_1)$ converges in distribution to $F(r,(T-T_0-r)-(\ell+2)r-1)$ as $T \rightarrow \infty$.
Convergence to a non-degenerate limit implies asymptotic tightness; that is, for any $\delta > 0 $, there exists a compact interval such that the probability that $\widehat{F}_{(j_1 j_2)}(\mathcal{D}_2) \mid \widehat{m}(\mathcal{D}_1)$ lies outside this interval is smaller than $\delta$. 

For (ii), we formalize asymptotic deletion stability for the proposed outcome measures $\widehat{F}_{(j_1 j_2)}(\mathcal{D}_2) \mid \widehat{m}(\mathcal{D}_1)$  as follows.
\begin{definition} [Asymptotic deletion stability] \label{df:definition2}
    Assume that $T-T_0 \rightarrow \infty$. Suppose that there exists a sequence $L_{T-T_0}$ such that $L_{T-T_0} \rightarrow \infty$ and  $T-T_0-L_{T-T_0} \rightarrow \infty$. Let $\mathcal{D}^{\prime}_2$ denote a version of $\mathcal{D}_2$ with its first $L_{T-T_0}$ observations removed. Consider an estimator $\hat{\theta} \mid m$ for each $m \in \mathcal{M}^{*}$, where $|\mathcal{M}^{*}| < \infty$ and $Pr(\widehat{m}(\mathcal{D}_{1}) \in \mathcal{M}^{*}) \rightarrow 1$. The estimator $\hat{\theta} \mid \widehat{m}$ is said to be \textit{asymptotically deletion stable} if
    $$\widehat{\theta}(\mathcal{D}_2) \mid \widehat{m}(\mathcal{D}_1) - \widehat{\theta}(\mathcal{D}^{\prime}_2) \mid \widehat{m}(\mathcal{D}_1) \overset{p}{\longrightarrow} 0.$$
\end{definition}
\noindent The following theorem shows that, under suitable regularity conditions, the proposed outcome measure for effective connectivity satisfies asymptotic deletion stability, ensuring that the corresponding sample-splitting approach is asymptotically valid.

\begin{theorem}
\label{thm:theorem3}
Consider the setting of Definition~\ref{df:definition2}.
Under the null hypothesis that $\widetilde{\mathbf{X}}_{\cdot j_1}$ does not Granger-cause $\widetilde{\mathbf{X}}_{\cdot j_2}$ conditioning on $\widetilde{\mathbf{X}}_{\cdot \mathcal{J}}$, suppose that 
\[
\lim\limits_{T\to\infty} \frac{(T-T_0-r)-(\ell+2)r-1}{(T-T_0-L_{T-T_0}-r)-(\ell+2)r-1} =1.
\]
Then  
\[
 \widehat{F}_{(j_1 j_2)}(\mathcal{D}_2) \mid \widehat{m}(\mathcal{D}_1) - \widehat{F}_{(j_1 j_2)}(\mathcal{D}'_2) \mid \widehat{m}(\mathcal{D}_1) \overset{p}{\longrightarrow} 0.
 \]
Together with tightness, this implies that $\widehat{F}_{(j_1 j_2)}(\mathcal{D}_2) \mid \widehat{m}(\mathcal{D}_1)$ provides an asymptotically valid sample-splitting procedure. 
\end{theorem}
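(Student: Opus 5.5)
Because $|\mathcal{M}^{*}|<\infty$ and $Pr(\widehat{m}(\mathcal{D}_1)\in\mathcal{M}^{*})\to 1$, I will prove the difference tends to zero in probability for each \emph{fixed} specification $m=(\mathcal{J},r)\in\mathcal{M}^{*}$, and then take a union bound. Write $\widehat F(\mathcal{D}_2)-\widehat F(\mathcal{D}'_2)=\sum_{m\in\mathcal{M}^{*}}\mathbf{1}\{\widehat m=m\}\,(\widehat F_m(\mathcal{D}_2)-\widehat F_m(\mathcal{D}'_2))+R_T$. Here $R_T$ is nonzero only on the event $\{\widehat m\notin\mathcal{M}^{*}\}$, whose probability vanishes. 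Each summand is bounded in absolute value by $|\widehat F_m(\mathcal{D}_2)-\widehat F_m(\mathcal{D}'_2)|$. A finite sum of $o_p(1)$ terms is $o_p(1)$, so this handles the random selection, even though $\widehat m$ depends on $\mathcal{D}_1$, which is correlated with $\mathcal{D}_2$.

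For a fixed $m$, set $n_2=T-T_0-r$, $n_2'=T-T_0-L_{T-T_0}-r$, $d_2=n_2-(\ell+2)r-1$ and $d_2'=n_2'-(\ell+2)r-1$. Define $\mathrm{RSS}_R,\mathrm{RSS}_F$ as the reduced and full residual sums of squares on $\mathcal{D}_2$, and $\mathrm{RSS}'_R,\mathrm{RSS}'_F$ as those on $\mathcal{D}'_2$. Then
\[
\widehat F_m(\mathcal{D}_2)=\frac{\mathrm{RSS}_R-\mathrm{RSS}_F}{r}\cdot\frac{d_2}{\mathrm{RSS}_F},
\]
and the analogous formula with primes and $d_2'$ holds for $\mathcal{D}'_2$. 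Under the null that $\widetilde{\mathbf{X}}_{\cdot j_1}$ does not Granger-cause $\widetilde{\mathbf{X}}_{\cdot j_2}$ given $\widetilde{\mathbf{X}}_{\cdot\mathcal{J}}$, the coefficients on the lags of $j_1$ are zero. By standard OLS theory for stationary VAR models (ergodicity, finite fourth moments, Gaussian innovations as assumed in Section~\ref{subsec: condGranger}), the numerator $\mathrm{RSS}_R-\mathrm{RSS}_F$ is a quadratic form $\hat\beta_{j_1}^\top(\text{Schur-complement Gram matrix})\hat\beta_{j_1}$. Moreover, $\mathrm{RSS}_F/n_2\to\sigma^2_{j_2}$ in probability, where $\sigma^2_{j_2}$ is the innovation variance. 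The same limits hold with primes.

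The key step is to show $(\mathrm{RSS}_R-\mathrm{RSS}_F)-(\mathrm{RSS}'_R-\mathrm{RSS}'_F)=o_p(1)$ and $d_2/\mathrm{RSS}_F-d_2'/\mathrm{RSS}'_F=o_p(1)$.

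The second claim is straightforward. Write $d_2/\mathrm{RSS}_F=(d_2/n_2)(n_2/\mathrm{RSS}_F)$. We have $d_2/n_2\to1$, and $n_2/\mathrm{RSS}_F\to\sigma^{-2}_{j_2}$ in probability. The same holds on $\mathcal{D}'_2$, using the hypothesis $d_2/d_2'\to1$ together with $n_2'\to\infty$. Both ratios therefore converge to the same constant.

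For the numerator, write $\sqrt{n_2}\,\hat\beta_{j_1}=(\widehat\Gamma/n_2)^{-1}\,n_2^{-1/2}\sum_t \mathbf{u}_t\varepsilon_t$, where $\mathbf{u}_t$ are the partialled-out lags of $j_1$. Split the score sum into the first $L$ terms and the remaining $n_2'$ terms:
\[
n_2^{-1/2}\sum_{t\le L}\mathbf{u}_t\varepsilon_t+\sqrt{n_2'/n_2}\;n_2'^{-1/2}\sum_{t>L}\mathbf{u}_t\varepsilon_t .
\]
The second piece equals $\sqrt{n_2'/n_2}$ times the score on $\mathcal{D}'_2$. By the hypothesis, $n_2'/n_2\to1$, which is equivalent to $L_{T-T_0}/n_2\to0$. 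Since $\varepsilon_t$ is a martingale difference sequence, the first piece has variance of order $L/n_2\to0$, so it is $o_p(1)$. The normalized Gram matrices $\widehat\Gamma/n_2$ and $\widehat\Gamma'/n_2'$ converge to the same population matrix by the ergodic theorem. Since the quadratic form is a continuous function of the score vector and the Gram matrix, and both are $O_p(1)$, the numerators differ by $o_p(1)$.

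Finally, combine the two claims through the product of $O_p(1)$ and $o_p(1)$ terms to get $\widehat F_m(\mathcal{D}_2)-\widehat F_m(\mathcal{D}'_2)=o_p(1)$. Tightness of the $F$ limit, noted before Definition~\ref{df:definition2}, together with Lunde et al.'s result then yields asymptotic validity in the sense of Definition~\ref{df:definition1}.

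I expect the main obstacle to be the uniform control of the initial-segment contribution and of the Gram matrices. This needs the data in $\mathcal{D}_2$, which start at the dependent split point $T_0$, to satisfy a law of large numbers and a CLT despite conditioning on $\mathcal{D}_1$. I would try to handle this through weak stationarity plus geometric ergodicity (mixing) of the stable VAR, which is where the dependence between $\mathcal{D}_1$ and $\mathcal{D}_2$ is absorbed.
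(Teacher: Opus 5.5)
Your proposal is correct, and it takes a genuinely different and more complete route than the paper.

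\textbf{The paper's route.} The paper applies the law of large numbers to the four normalized residual sums of squares: reduced and full, on $\mathcal{D}_2$ and on $\mathcal{D}'_2$. It then asserts that $\widehat F_{(j_1j_2)}(\mathcal{D}_2)/\widehat F_{(j_1j_2)}(\mathcal{D}'_2)$ factors into ratios of residual variances and degree-of-freedom terms, each tending to one, and concludes that this ratio tends to one. It never uses the finite class $\mathcal{M}^{*}$ to handle the data-dependent $\widehat m(\mathcal{D}_1)$. It also leaves implicit the step from ratio to difference, which needs $\widehat F_{(j_1j_2)}(\mathcal{D}'_2)=O_p(1)$, i.e.\ the tightness of Lemma~\ref{lemma: lemma1} applied to $\mathcal{D}'_2$.

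\textbf{Your route.} You reduce to a fixed $m$ through the finite union. You treat the scale factor $d_2/\mathrm{RSS}_F$ by the law of large numbers, exactly as the paper does. You control the numerator through its Frisch--Waugh--Lovell form, a quadratic form in the normalized score. The deleted block then contributes variance $O(L_{T-T_0}/n_2)$ by the martingale-difference property. This is where the degree-of-freedom hypothesis enters, in its natural form $L_{T-T_0}/n_2\to 0$.

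\textbf{What your route buys.} Under the null, the reduced and full innovation variances for $j_2$ coincide. Hence $\mathrm{RSS}_R-\mathrm{RSS}_F=O_p(1)$, while each sum is of order $n_2$. The law of large numbers only yields $(\mathrm{RSS}_R-\mathrm{RSS}_F)/d_2\to 0$ on both samples. The ratio $(\mathrm{RSS}_R-\mathrm{RSS}_F)/(\mathrm{RSS}'_R-\mathrm{RSS}'_F)$ is therefore a $0/0$ form that the paper's factorization does not resolve. Your score-splitting argument is exactly what closes it. The paper's version is shorter but, taken literally, only handles the scale factor.

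\textbf{Details to tighten.}
\begin{enumerate}
\item The partialled-out regressors $\mathbf{u}_t$ come from in-sample projections on the remaining regressors (intercept and lags of $j_2$ and $\mathcal{J}$). These projections differ between $\mathcal{D}_2$ and $\mathcal{D}'_2$. So the tail of the $\mathcal{D}_2$ score equals $\sqrt{n_2'/n_2}$ times the $\mathcal{D}'_2$ score only after you replace the estimated projection coefficients by their population values. That replacement costs only $O_p(n_2^{-1/2})$ on each sample.
\item The union over $\mathcal{M}^{*}$ needs the null and correct specification for every $m$ selected with non-vanishing probability. If some $m$ has a conditioning set under which $j_1$ still Granger-causes $j_2$, then $\widehat F_m$ grows like $n_2$ and the difference grows with $L_{T-T_0}$. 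The conclusion should therefore be read on the event that the selected model satisfies the null.
\item The obstacle you anticipate is largely absent from your own reduction. For fixed $m$, you only need unconditional limits on $\mathcal{D}_2$, and by stationarity these do not depend on the block starting at $T_0$. The finite union already takes care of the dependence of $\widehat m$ on $\mathcal{D}_1$.
\end{enumerate}
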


The proposed sample-splitting approach is particularly useful for time-series settings in which the derived variables represent relational outcomes among interconnected units and must be learned from longitudinal data. As the weak stationarity assumption is the key to establishing asymptotic validity, the time-series data used should consist entirely of either pre-intervention or post-intervention periods, so that the series is not disrupted by the intervention.

\section{Causal inference with derived multivariate outcomes}\label{section4: layer2}

With the derived outcomes $\widehat{\mathbf{Y}}_{i}$, we now consider inference for the intervention effect on them. We focus on (i) addressing confounding due to covariates $\mathbf{W}_i$ under the ignorability assumption (Assumptions~\ref{as:assumption4} and~\ref{as:assumptionA3}), and (ii) accounting for the multiple testing problem when testing intervention effects across multiple pairs of $(j_{1}, j_{2})$ in~\eqref{eq:tau} or~\eqref{eq:null}.

\subsection{Treatment effect estimation for derived outcomes}
We first aim to estimate each component of $\boldsymbol{\tau}^{*}  = \mathbb{E}_{p}[\widehat{\mathbf{Y}}(1) - \widehat{\mathbf{Y}}(0)]$; in particular, $\tau^{*}_{(j_{1} j_{2})} = \mathbb{E}_{p}[\widehat{Y}_{(j_{1}j_{2})}(1) - \widehat{Y}_{(j_{1}j_{2})}(0)]$. 
To address confounding between $Z$ and $\widehat{\mathbf{Y}}$ under Assumption~\ref{as:assumption4}, we construct an inverse probability weighting (IPW) estimator. One could alternatively consider an augmented IPW estimator that incorporates an outcome regression model to improve robustness to model misspecification. However, with multivariate outcomes exhibiting within-variable dependence, specifying a model for $\mathbb{E}_{p}(\widehat{\mathbf{Y}} \mid Z , \mathbf{W})$ is nontrivial. We therefore focus on the IPW estimator.

With the propensity score defined as $ e(\mathbf{W}_i)  \coloneqq Pr(Z=1 \mid \mathbf{W}_i)$, the IPW estimator for $\tau^{*}_{(j_{1}j_{2})}$ is:
\begin{equation} \label{eq:IPW}
    \widehat{\tau}^{*}_{(j_{1} j_{2})}= \frac{1}{n}\sum_{i=1}^n \left\{ \frac{Z_i \widehat{Y}_{i(j_{1} j_{2})}}{\widehat{e}(\mathbf{W}_i)} - \frac{(1-Z_i)\widehat{Y}_{i(j_{1} j_{2})}}{1-\widehat{e}(\mathbf{W}_i)} \right\}.
\end{equation}
Under Assumptions \ref{as:assumption1}--\ref{as:assumption4} and a correctly specified propensity score model, it is straightforward to show that $\widehat{\boldsymbol{\tau}}^{*}$ is a consistent estimator of $\boldsymbol{\tau}^{*}$. Deriving its asymptotic variance and establishing a normal approximation requires additional care, however, because the components of $\widehat{\tau}^{*}$ are generally correlated.

Consider a parametric model for the propensity score of the form of $e(\mathbf{X}_{i}) = \sigma(\mathbf{W}_i^\top \bm{\beta})$, where $\sigma(\cdot)$ is a known smooth link function. Let $\widehat{\boldsymbol{\beta}}$ denote the maximum likelihood estimator of $\boldsymbol{\beta}$, and define $\widehat e(\mathbf W_i) = \sigma(\mathbf W_i^\top \widehat{\boldsymbol{\beta}})$. Write $\eta_i = \mathbf{W}_i^\top \bm{\beta}$, and let $\sigma_i' = \sigma'(\eta_i)$ denote the derivative of $\sigma(\eta_i)$. Under regularity conditions, $\widehat{\tau}^{*}_{(j_{1} j_{2})}$ has the following asymptotically linear representation,
\begin{eqnarray} \label{eq:expansion}
\sqrt{n}\left( \widehat{\tau}^{*}_{(j_{1} j_{2})} - \tau^{*}_{(j_{1} j_{2})}  \right) = \frac{1}{\sqrt{n}}\sum_{i=1}^{n} h_{i(j_{1} j_{2})} + o(n^{-1/2}),
\end{eqnarray}
where $h_{i(j_{1}j_{2})}$ is the influence function of $\widehat{\tau}_{(j_{1}j_{2})}^{*}$. Specifically,
\begin{align*}
h_{i(j_{1} j_{2})} & = \frac{Z_i \widehat{Y}_{i(j_{1} j_{2})}}{\sigma_i} - \frac{(1-Z_i)\widehat{Y}_{i(j_{1} j_{2})}}{1-\sigma_i} - \tau^{*}_{(j_{1} j_{2})}\\
& - \left[ \frac{1}{n}\sum_{i=1}^n 
\left[Z_i\widehat{Y}_{i(j_{1} j_{2})}\,\frac{\sigma_i'}{\sigma_i^2}
+ (1-Z_i)\widehat{Y}_{i(j_{1} j_{2})}\,\frac{\sigma_i'}{(1-\sigma_i)^2}
\right]\mathbf W_i^\top \right] \mathcal I_{\bm{\beta}}^{-1}
(Z_i - \sigma_i) \frac{\sigma_i'}{\sigma_i(1-\sigma_i)} \mathbf W_i,
\end{align*}
where $\mathcal I_{\bm{\beta}}$ denotes the Fisher information matrix for $\bm{\beta}$. 

Compared to the derivations in~\cite{qiu_unveiling_2024}, in which the proposed IPW estimator based on the derived multivariate outcomes $\widehat{\mathbf{Y}}$ targets the estimand defined in terms of the potential outcomes of the unobserved outcome $\mathbf{Y}$ (i.e., $\boldsymbol{\tau}$), our inference targets instead the estimand defined in terms of potential outcomes of the derived outcomes $\widehat{\mathbf{Y}}_{i}$ (i.e., $\boldsymbol{\tau}^{\ast}$). Moreover, the link function $\sigma(\cdot)$ can be any known smooth mapping from $\mathbf{W}_i^\top \bm{\beta}$ to $(0,1)$, and need not be restricted to the logistic link. Under such general setting, we establish the asymptotic normality of $\widehat{\boldsymbol{\tau}}^{\ast}$ in the following theorem.

\begin{theorem}\label{thm:theorem4}
    Suppose that the propensity score model is correctly specified and Assumptions \ref{as:assumption1}--\ref{as:assumption4} hold. If $\mathcal{V}_{(j_{1} j_{2})}:= \mathbb{E}(\frac{1}{n}\sum_{i=1}^{n} h_{i(j_{1} j_{2})}^2) < \infty$, then 
    \[ \sqrt{n}(\widehat{\tau}^{*}_{(j_{1} j_{2})} - \tau^{*}_{(j_{1} j_{2})})  \overset{d}{\longrightarrow} N(0,\mathcal{V}_{(j_{1} j_{2})}) \text{ as } n \rightarrow \infty.
    \]
\end{theorem}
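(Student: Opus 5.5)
The plan is the standard M-estimation argument: stack the IPW estimator and the propensity score MLE, then apply a first-order Taylor expansion to obtain the asymptotically linear representation~\eqref{eq:expansion}, and finish with a central limit theorem for the i.i.d. summands $h_{i(j_1j_2)}$. Fix a pair $(j_1,j_2)$ and drop the subscript. Define the score of the binary likelihood,
\[
S_i(\bm\beta) = (Z_i-\sigma_i)\frac{\sigma_i'}{\sigma_i(1-\sigma_i)}\mathbf W_i ,
\]
so that $\widehat{\bm\beta}$ solves $\sum_i S_i(\widehat{\bm\beta})=0$. Under correct specification and standard regularity conditions (compact parameter space, identifiability, $\mathcal I_{\bm\beta}$ nonsingular, $\sigma$ twice continuously differentiable, $\mathbf W$ with enough moments), the usual MLE theory gives
\[
\sqrt n(\widehat{\bm\beta}-\bm\beta)=\mathcal I_{\bm\beta}^{-1}n^{-1/2}\sum_i S_i(\bm\beta)+o_p(1).
\]

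First, I would write $\widehat\tau^*-\tau^* = A_n(\widehat{\bm\beta})-\tau^*$, where $A_n(\bm b)$ is the sample average in~\eqref{eq:IPW} with $\widehat e$ replaced by $\sigma(\mathbf W_i^\top\bm b)$. Expanding around $\bm\beta$ gives
\[
A_n(\widehat{\bm\beta}) = A_n(\bm\beta) + \nabla A_n(\bm\beta)^\top(\widehat{\bm\beta}-\bm\beta) + R_n .
\]
Here $\nabla A_n(\bm\beta) = -\frac1n\sum_i\big[Z_i\widehat Y_i\sigma_i'/\sigma_i^2+(1-Z_i)\widehat Y_i\sigma_i'/(1-\sigma_i)^2\big]\mathbf W_i$, up to sign conventions that must be checked against the displayed $h_i$. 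To control $R_n$, I would use positivity (Assumption~\ref{as:assumption3}), strengthened to $\sigma_i$ being bounded away from $0$ and $1$ on a neighbourhood of $\bm\beta$, together with $\|\widehat{\bm\beta}-\bm\beta\|=O_p(n^{-1/2})$ and a dominating envelope for the second derivative. This shows $\sqrt n R_n=o_p(1)$.

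Second, the law of large numbers gives $\nabla A_n(\bm\beta)\to G:=\mathbb E[\nabla A_n(\bm\beta)]$ in probability. Substituting the MLE expansion then yields
\[
\sqrt n(\widehat\tau^*-\tau^*) = n^{-1/2}\sum_i\Big\{\phi_i - G^\top\mathcal I_{\bm\beta}^{-1}S_i\Big\}+o_p(1),
\qquad
\phi_i = \frac{Z_i\widehat Y_i}{\sigma_i}-\frac{(1-Z_i)\widehat Y_i}{1-\sigma_i}-\tau^* .
\]
Replacing $G$ by its empirical version, as written in the paper, changes the sum only by $o_p(1)$. This reproduces $h_i$ as given.

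Third, I would show that $h_i$ has mean zero. The score has mean zero under correct specification. For $\phi_i$, consistency (Assumption~\ref{as:assumption1}) together with ignorability for the derived outcomes, namely $Z\indep\widehat{\mathbf Y}(0),\widehat{\mathbf Y}(1)\mid\mathbf W$ (implied by Assumption~\ref{as:assumption4} as noted in Section~\ref{ssec:identification}), gives
\[
\mathbb E[Z\widehat Y/e(\mathbf W)]=\mathbb E[\widehat Y(1)],
\]
and similarly for the control arm, so $\mathbb E\phi_i=0$. SUTVA (Assumption~\ref{as:assumption2}) and independent sampling of subjects make the $h_i$ i.i.d. Note that $\widehat Y_i$ depends only on subject $i$'s own time series, and the split-sample construction of Section~\ref{section3: layer1} is performed within each subject, so independence across subjects is preserved. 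Given $\mathcal V<\infty$, the Lindeberg--L\'evy CLT then gives convergence to $N(0,\mathcal V)$, and Slutsky's lemma absorbs the $o_p(1)$ terms.

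The main obstacle I expect is the remainder control in the first step for a general link $\sigma$. Unlike the logistic case, $\sigma'/[\sigma(1-\sigma)]$ is not constant, so the second derivatives of $A_n$ involve $\sigma''$ and powers of $1/\sigma$ and $1/(1-\sigma)$. To handle this, I would impose strict overlap, $\epsilon\le\sigma\le1-\epsilon$, together with $\mathbb E[\widehat Y^2\|\mathbf W\|^2]<\infty$. With these in place, a uniform law of large numbers over a shrinking neighbourhood of $\bm\beta$ handles the Hessian, and the rest of the argument follows.
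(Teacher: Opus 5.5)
Your proposal follows the paper's own argument: the MLE linearization of $\widehat{\bm\beta}$, a first-order Taylor expansion of the IPW average in $\bm\beta$ with the same gradient, substitution to obtain $h_{i(j_1j_2)}$, and a CLT. It also supplies details the paper leaves implicit, namely remainder control under strict overlap, replacing the empirical gradient by its limit, and the mean-zero check via consistency and ignorability. One small slip: your $\nabla A_n(\bm\beta)$ already carries the minus sign, so with $G:=\mathbb{E}[\nabla A_n(\bm\beta)]$ the expansion should read $\phi_i + G^\top\mathcal{I}_{\bm\beta}^{-1}S_i$ rather than $\phi_i - G^\top\mathcal{I}_{\bm\beta}^{-1}S_i$; only the former matches the paper's $h_{i(j_1j_2)}$.
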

\noindent An analytic estimator of the asymptotic variance, $\widehat{\mathcal{V}}_{(j_{1} j_{2})}$, can be derived by $\frac{1}{n}\sum_{i=1}^{n} \widehat{h}_{i(j_{1}  j_{2})}^2$, where $\widehat{h}_{i(j_{1} j_{2})}$ is obtained by replacing $\boldsymbol{\beta}$ with $\widehat{\boldsymbol{\beta}}$ in $h_{i(j_{1} j_{2})}$.

\begin{corollary}\label{corollary1}
    Under the conditions of Theorem~\ref{thm:theorem4}, $\widehat{\mathcal{V}}_{(j_1 j_2)}$ is a consistent estimator of $\mathcal{V}_{(j_{1}j_{2})}$. 
\end{corollary}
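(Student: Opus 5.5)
The plan is the standard plug-in argument for sandwich-type variance estimators. We show that $\widehat{\mathcal{V}}_{(j_1 j_2)} - \frac{1}{n}\sum_{i=1}^n h_{i(j_1 j_2)}^2 \overset{p}{\longrightarrow} 0$ and then that $\frac{1}{n}\sum_{i=1}^n h_{i(j_1 j_2)}^2 - \mathcal{V}_{(j_1 j_2)} \overset{p}{\longrightarrow} 0$.

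\textbf{Step 1: the oracle average.} The subjects are i.i.d. under Assumption~\ref{as:assumption2} and the sampling model. Up to the inner sample average in its second term, which we replace by its population limit $\mathbf{A}_{(j_1 j_2)} := \mathbb{E}\bigl[\{Z\widehat{Y}_{(j_1 j_2)}\sigma'/\sigma^2 + (1-Z)\widehat{Y}_{(j_1 j_2)}\sigma'/(1-\sigma)^2\}\mathbf{W}^\top\bigr]$, each $h_{i(j_1 j_2)}$ is an i.i.d. draw with finite second moment by the hypothesis $\mathcal{V}_{(j_1 j_2)}<\infty$. The weak law of large numbers then gives $\frac1n\sum_i h_{i(j_1 j_2)}^2 \overset{p}{\longrightarrow} \mathcal{V}_{(j_1 j_2)}$. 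The discrepancy between the inner sample average and $\mathbf{A}_{(j_1 j_2)}$ is itself $o_p(1)$ by another LLN. Since $\mathbf{A}$ enters $h_i$ linearly, replacing one by the other changes $\frac1n\sum_i h_i^2$ by at most $o_p(1)$ times $\frac1n\sum_i \|\psi_i\|^2 + \frac1n\sum_i|h_i\|\psi_i\|$, with $\psi_i$ the score term. By Cauchy--Schwarz and the finiteness of the score's second moment (Fisher information), this is $O_p(1)$.

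\textbf{Step 2: plug-in error.} Write $\widehat h_i = h_i + (\widehat h_i - h_i)$. Then
\[
\Bigl|\tfrac1n\sum_i \widehat h_i^2 - \tfrac1n\sum_i h_i^2\Bigr| \le \tfrac1n\sum_i (\widehat h_i-h_i)^2 + 2\Bigl(\tfrac1n\sum_i h_i^2\Bigr)^{1/2}\Bigl(\tfrac1n\sum_i(\widehat h_i-h_i)^2\Bigr)^{1/2},
\]
so it suffices to show $\frac1n\sum_i(\widehat h_i-h_i)^2=o_p(1)$. Under correct specification and the regularity conditions used for Theorem~\ref{thm:theorem4}, the MLE satisfies $\widehat{\boldsymbol\beta}-\boldsymbol\beta = O_p(n^{-1/2})$. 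Also $\widehat{\mathcal I}_{\boldsymbol\beta}\to\mathcal I_{\boldsymbol\beta}$ in probability, with $\mathcal I_{\boldsymbol\beta}$ invertible.

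Each ingredient of $h_i$ is a smooth function of $\eta_i=\mathbf W_i^\top\boldsymbol\beta$: namely $1/\sigma_i$, $1/(1-\sigma_i)$, $\sigma_i'/\sigma_i^2$, $\sigma_i'/(1-\sigma_i)^2$ and $\sigma_i'/\{\sigma_i(1-\sigma_i)\}$. A mean-value expansion therefore gives $|\widehat h_i - h_i| \le \|\widehat{\boldsymbol\beta}-\boldsymbol\beta\|\, M_i + o_p(1)\,N_i$. Here $M_i$ is built from $|\widehat Y_{i(j_1j_2)}|$, $\|\mathbf W_i\|$ and derivative bounds of these functions in a neighbourhood of $\boldsymbol\beta$. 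The term $N_i$ collects the changes in the estimated matrix $\widehat{\mathbf A}\widehat{\mathcal I}^{-1}$.

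\textbf{Main obstacle.} The hard part is controlling $\frac1n\sum_i M_i^2$, i.e.\ obtaining $O_p(1)$ moments locally uniformly in $\boldsymbol\beta$. Propensity scores can approach $0$ or $1$, so the reciprocals can blow up. We would strengthen positivity (Assumption~\ref{as:assumption3}) to strict overlap, $\epsilon\le\sigma(\mathbf w^\top\boldsymbol\beta')\le1-\epsilon$ for $\boldsymbol\beta'$ in a neighbourhood of $\boldsymbol\beta$ and all $\mathbf w\in\Omega_W$. We would also assume bounded $\sigma'$ and $\sigma''$ there and finite $\mathbb E[\widehat Y_{(j_1j_2)}^2\|\mathbf W\|^4]$. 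These are the same regularity conditions implicitly needed for the expansion~\eqref{eq:expansion}. Under them, $\frac1n\sum_i M_i^2=O_p(1)$ and $\frac1n\sum_iN_i^2=O_p(1)$ by the LLN, so $\frac1n\sum_i(\widehat h_i-h_i)^2 = O_p(n^{-1})+o_p(1)=o_p(1)$.

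Combining Steps 1 and 2 gives $\widehat{\mathcal V}_{(j_1j_2)}\overset{p}{\longrightarrow}\mathcal V_{(j_1j_2)}$.
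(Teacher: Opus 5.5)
Your argument is correct and follows the same plug-in route as the paper: consistency of the MLE $\widehat{\boldsymbol\beta}$ and smoothness of $\sigma$ carry over to $\widehat h_{i(j_1 j_2)}$, and hence to $\widehat{\mathcal V}_{(j_1 j_2)}$. It is more complete than the paper's proof, which asserts $\widehat h_{i(j_1 j_2)}^2 \overset{p}{\to} h_{i(j_1 j_2)}^2$ for each $i$ and concludes; that termwise convergence does not by itself control an average of $n$ terms. You supply the two missing pieces: the law of large numbers for the oracle average $\frac{1}{n}\sum_i h_{i(j_1 j_2)}^2$, and the Cauchy--Schwarz/mean-value bound on $\frac{1}{n}\sum_i(\widehat h_{i(j_1 j_2)}-h_{i(j_1 j_2)})^2$, made rigorous by the strict-overlap and moment conditions you state explicitly.
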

\noindent In practice, we restrict attention to components $(j_{1}, j_{2})$ such that $\widehat{\mathcal{V}}_{(j_{1} j_{2})} > c$ for some 
small constant $c > 0$, thereby excluding components associated with degenerate limiting distributions.

Under the additional condition below, the estimator $\widehat{\tau}_{(j_{1}j_{2})}$ may be interchangeably with $\widehat{\tau}^{*}_{(j_{1}j_{2})}$ when the derived outcomes are asymptotically unbiased for the unobserved outcomes (Condition~\ref{con:condition1}). 
\begin{condition} \label{con:condition2}
For $z \in \{0,1\}$, as $T,~n \rightarrow \infty$,  $\max \limits_{j_1, j_2=[1:p]; ~j_1 \neq j_2} \mathbb{E}_{p} \left[ \Delta_{(j_{1} j_{2})}(z) \right]=o_{p}(n^{-1/2})$, where $\Delta_{i(j_{1}j_{2})}(z)=\mathbb{E}_{i}[\widehat{Y}_{i(j_{1}j_{2})}(z) \mid \widetilde{\mathbf{X}}_i(z), \mathbf{W}_i] - Y_{i(j_{1}j_{2})}(z)$.
\end{condition}
\noindent Under Condition~\ref{con:condition2}, $\sqrt{n} (\boldsymbol{\tau}^{\ast} - \boldsymbol{\tau}) \overset{p}{\to} 0$ when $T,~n \rightarrow \infty$, and the following result follows by Slutsky's Theorem.
\begin{corollary}\label{corollary2}
    Under the assumptions in Theorem~\ref{thm:theorem4} and Condition~\ref{con:condition2}, as $T,~n \rightarrow \infty$, 
    $
    \sqrt{n}(\widehat{\tau}^{*}_{(j_{1} j_{2})} - \tau_{(j_{1} j_{2})})  \overset{d}{\to} N(0,\mathcal{V}_{(j_{1} j_{2})})$.
\end{corollary}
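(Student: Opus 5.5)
The plan is to decompose the estimation error into a statistical part and a bias part:
\[
\sqrt{n}\bigl(\widehat{\tau}^{*}_{(j_1j_2)}-\tau_{(j_1j_2)}\bigr)=\sqrt{n}\bigl(\widehat{\tau}^{*}_{(j_1j_2)}-\tau^{*}_{(j_1j_2)}\bigr)+\sqrt{n}\bigl(\tau^{*}_{(j_1j_2)}-\tau_{(j_1j_2)}\bigr).
\]
By Theorem~\ref{thm:theorem4}, the first term converges in distribution to $N(0,\mathcal{V}_{(j_1j_2)})$ as $n\to\infty$. For the joint limit $T,n\to\infty$, I would note that the influence-function expansion in~\eqref{eq:expansion} holds for each $T$. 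I would also note that $\mathcal{V}_{(j_1j_2)}$ is understood as the (limiting) variance, so the Lindeberg argument underlying Theorem~\ref{thm:theorem4} carries over along the joint sequence. This step needs the moment condition $\mathcal{V}_{(j_1j_2)}<\infty$ to hold uniformly in $T$, which I would state explicitly as part of the assumptions inherited from Theorem~\ref{thm:theorem4}.

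For the second term, I would write $\tau^{*}-\tau$ as $\mathbb{E}_p[\widehat{Y}(1)-Y(1)]-\mathbb{E}_p[\widehat{Y}(0)-Y(0)]$. By the tower property, conditioning on $(\widetilde{\mathbf{X}}(z),\mathbf{W})$ inside $\mathbb{E}_p$, each piece equals $\mathbb{E}_p[\Delta_{(j_1j_2)}(z)]$. This gives
\[
|\tau^{*}_{(j_1j_2)}-\tau_{(j_1j_2)}| \le \sum_{z\in\{0,1\}}\max_{j_1\neq j_2}\bigl|\mathbb{E}_p[\Delta_{(j_1j_2)}(z)]\bigr|,
\]
which is $o_p(n^{-1/2})$ by Condition~\ref{con:condition2}. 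Hence $\sqrt{n}(\tau^{*}-\tau)\to 0$ in probability; this is the claim stated just before the corollary. The mild obstacle is that Condition~\ref{con:condition2} is phrased with a one-sided max rather than an absolute value. I would read it as a bound on the absolute bias, which is the natural interpretation, or else apply it to $\pm$ versions.

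Finally, I would apply Slutsky's theorem: the sum of a term converging in distribution to $N(0,\mathcal{V}_{(j_1j_2)})$ and a term converging in probability to zero converges to the same normal limit. The main obstacle I anticipate is the first step, namely justifying the normal limit along a joint $T,n$ sequence in which the distribution of $\widehat{Y}$ depends on $T$. I would handle it with a triangular-array Lindeberg--Feller CLT applied to $n^{-1/2}\sum_i h_{i(j_1j_2)}$, using the i.i.d.\ structure across subjects for fixed $T$. The required Lindeberg condition would follow from a uniform $2+\delta$ moment bound on $h_{i(j_1j_2)}$.
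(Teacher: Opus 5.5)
Your proposal is correct and follows the paper's argument. The paper likewise splits off $\sqrt{n}(\tau^{*}-\tau)$, shows it vanishes under Condition~\ref{con:condition2} via the same tower-property identity $\tau^{*}_{(j_1j_2)}-\tau_{(j_1j_2)}=\mathbb{E}_p[\Delta_{(j_1j_2)}(1)]-\mathbb{E}_p[\Delta_{(j_1j_2)}(0)]$ used in the proof of Theorem~\ref{thm:theorem1}, and concludes with Theorem~\ref{thm:theorem4} and Slutsky's theorem. Your extra points fill in details the paper leaves implicit without changing the route: reading Condition~\ref{con:condition2} as a bound on the absolute bias, and handling the joint $T,n$ limit with a triangular-array Lindeberg--Feller argument under a uniform $2+\delta$ moment bound.
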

\noindent The proposed outcome measure, $\widehat{F}_{(j_1 j_2)}(\mathcal{D}_2) \mid \widehat{m}(\mathcal{D}_1)$, is an asymptotically unbiased estimator of $F_{(j_{1}j_{2}) }$, satisfying Condition~\ref{con:condition1} provided that $(T - T_0) > r(p + 2)$. 
Condition~\ref{con:condition2} may require a larger $T$ for each subject. Even when it is not satisfied, $\widehat{\tau}^\ast_{(j_{1}j_{2})}$ remains a valid test statistic for the hypotheses in~\eqref{eq:null} under Assumptions~\ref{as:assumption1}--\ref{as:assumption4}, and~\ref{as:assumptionA1}--\ref{as:assumptionA3}, as established in Theorem~\ref{thm:theorem2}.

\subsection{Simultaneous inference}

When inference concerns multiple components in $\boldsymbol{\tau}$ or $\boldsymbol{\tau}^{*}$, dependence among the estimator $\{\widehat{\tau}^{*}_{(j_{1} j_{2})}\}$ induced by within-outcome correlations and the use of shared estimated propensity scores must be properly accounted for. Valid simultaneous inference therefore requires multiple testing procedures that control error rates such as family-wise error rate (FWER) and/or the false discovery proportion (FDP). 
In Sections~\ref{section5: simulation} and~\ref{section6: application}, we employ a step-down procedure with augmentation that controls FDP exceedance rate. This approach is introduced by \cite{genovese2006exceedance} and recently combined with a Gaussian multiplier bootstrap by \cite{qiu_unveiling_2024}. Further details are provided in Supplementary Material Section~\ref{sec:multiinference}.

\section{Simulation studies} 
\label{section5: simulation}

In this section, we present results of simulation studies conducted to assess the validity and effectiveness of the proposed method. We generate the data for $n=100$ subjects. For each subject $i = [1:n]$, a vector of baseline covariates $\mathbf{W}_i \in \mathbb{R}^{q}$, with $q=5$, is generated from a multivariate normal distribution. The intervention indicator $Z_i$ is then generated according to $Pr(Z_i=1 \mid \mathbf{W}_i) = \text{logit}(\mathbf{W}_i^\top \bm{\beta})$, where $\bm{\beta} = (0.7, -0.8, 0.5, 0.3, -0.3)^\top $. 
For each subject $i$, we generate two potential outcomes of time series $\widetilde{\mathbf{X}}_{i}(z)$, $z \in \{0, 1\}$, of length $T=1000$ and $p\in\{21,51\}$ from a VAR(1) model. The transition matrix describing the linear dependencies among the time series is partitioned into $3\times3$ blocks of equal size, each of dimension $p/3$. 
Within each diagonal block, the autoregressive coefficients  are sampled from $\mathrm{Unif}(0.72, 0.85)$, while the off-diagonal coefficients within the blocks are generated more sparsely from $\mathrm{Unif}(0.03, 0.08)$ with random signs. The intervention effect is introduced through the between-block autoregressive coefficients. 
Specifically, when $Z_i=1$, a subset of these coefficients is set to values with magnitude $\delta \in \{0.02, ~0.04, ~0.06, ~0.08, ~0.10\}$, whereas when $Z_i=0$, all between-block coefficients are set to zero (i.e., $\delta = 0$). The parameter $\delta$ therefore controls the strength of the intervention effect. 
In addition, baseline covariates $\mathbf{W}_i$ influence the baseline connectivity structure by modifying the transition matrices.
These structural patterns are held fixed across subjects across Monte Carlo replications. 

Using simulated data, effective connectivity is derived using the proposed sample-splitting procedure, intervention effects are estimated using the IPW estimator in~\eqref{eq:IPW}, and simultaneous inference is conducted. Hypothesis testing of the null hypothesis in \eqref{eq:null} is performed at the $\alpha$=0.05 level. The step-down procedure with augmentation targets control of the false discovery exceedance probability, $\mathrm{FDPex} \coloneqq Pr(\mathrm{FDP}>0.1) \leq \alpha$. We compare the performance of four alternative methods: (i) the proposed sample-splitting approach and the IPW estimator with a correctly specified propensity score model; (ii) the proposed sample-splitting approach with an incorrect propensity score model; (iii) no sample splitting, using the second half of the data for both model selection and estimation, with a correctly specified propensity score model; and (iv) no sample splitting with an incorrectly specified propensity score model. For each method, we evaluate FWER and average bias under the global null ($\delta=0$). Under alternative scenarios ($\delta>0$), we report power, FDP, and $\mathrm{FDPex}$, averaged  over 1000 Monte Carlo replications. 

\begin{figure}[!ht]
    \centering
    \includegraphics[width=0.8\linewidth]{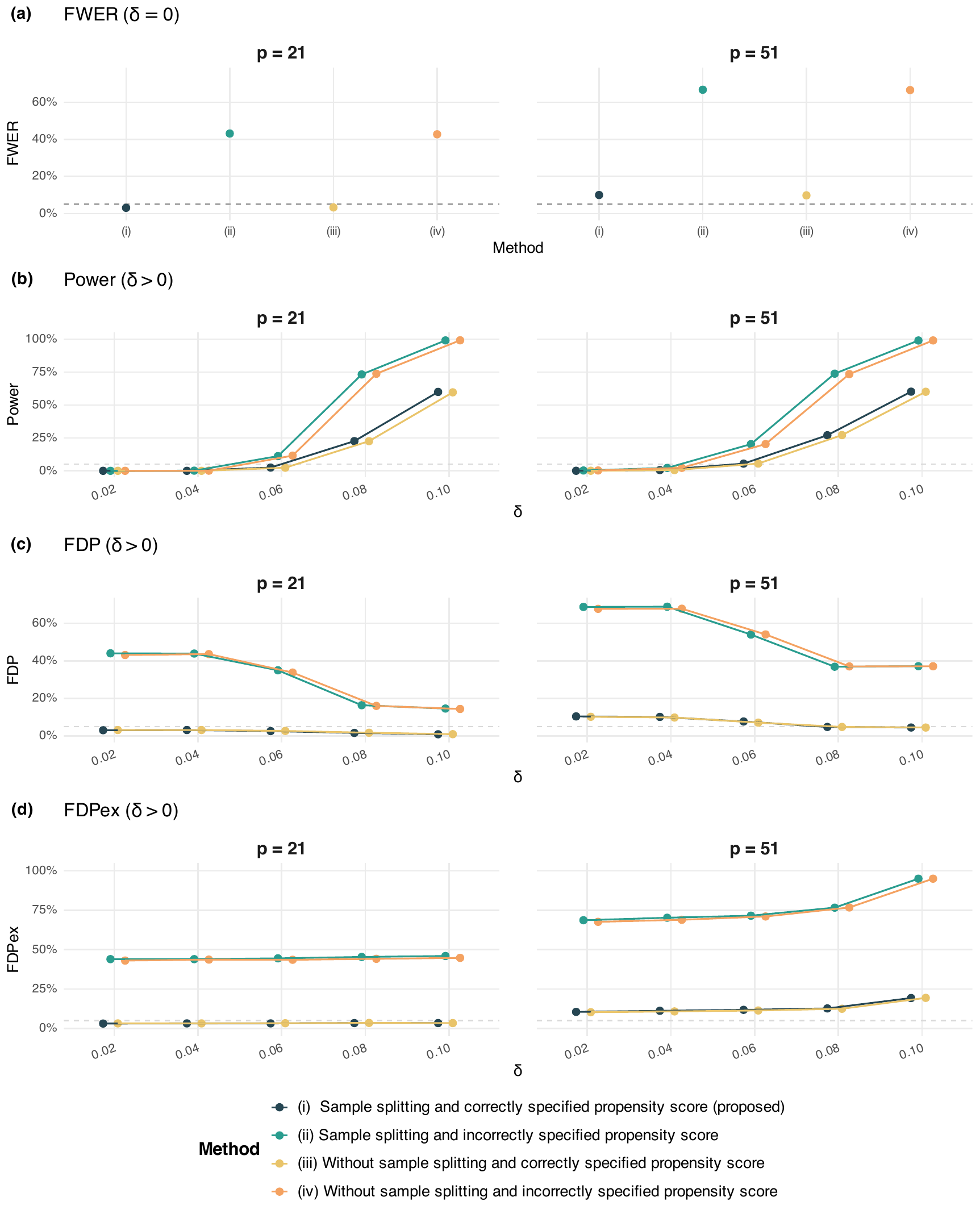}
    \caption{Simulation results when the number of subjects $n=100$. Panel (a) shows the FWER under the global null; panel (b) shows power under alternatives with varying intervention intensities ($\delta$); panels (c) and (d) show the FDP and FDPex, respectively, across different values of $\delta$.}
    \label{fig:simulation_results}
\end{figure}
Figure~\ref{fig:simulation_results} summarizes the simulation results for $n=100$. When $p=21$, the proposed method maintains valid control of the FWER near the nominal $5\%$ level and exhibits increasing power as $\delta$ increases, while controlling the FDP and the $\mathrm{FDPex}$. In contrast, methods based on an incorrectly specified propensity score model exhibit inflated FWER and elevated false discoveries due to biased effect estimation and underestimation of variance. The error rates are further inflated when $p=51$, as the number of direction components increase exponentially compared to $p=21$. Method (iii) performs comparably to the proposed method in FWER, power, FDP and FDPex, which may suggest that sample splitting offers little benefit in this setting. However, similar results likely reflects the limited degree of overfitting in our data-generating mechanism, which is based on relatively simple VAR models. In more complex settings with complicated confounding structures among outcome units, overfitting is expected to be more pronounced, making sample splitting necessary to ensure valid inference. Overall, the proposed method demonstrates valid type-I error control under the global null and improved power as the effect size increases. Additional results for larger sample size ($n=200$) are provided in Supplementary Material Section~\ref{supplementary:simulation}.

\section{Investigation of the causal effect of amyloid elevation on effective connectivity in Alzheimer's Disease} \label{section6: application}

We apply the proposed framework to examine the effect of amyloid burden, evaluated using amyloid PET imaging, on measures of brain effective connectivity. Our study population is restricted to participants in the non-dementia ADNI research groups at entry, including cognitively normal (CN), significant memory concern (SMC), early mild cognitive impairment (EMCI), and mild cognitive impairment (MCI). Accordingly, both the amyloid-positive and amyloid-negative groups exclude dementia participants, whose effective connectivity patterns may differ substantially related to dementia processes. For each subject, resting-state fMRI data are preprocessed and registered to the Montreal Neurological Institute (MNI) 152 T1 2mm standard template space. We compute region-specific mean time series using the AAL2 atlas~\citep{rolls_implementation_2015}, resulting in $p=$120 time series corresponding to predefined ROIs. Additional preprocessing details are provided in Supplementary Material Section~\ref{supplementary:data_application}. 

\begin{figure}[!ht]
    \centering
    \includegraphics[width=0.7\linewidth]{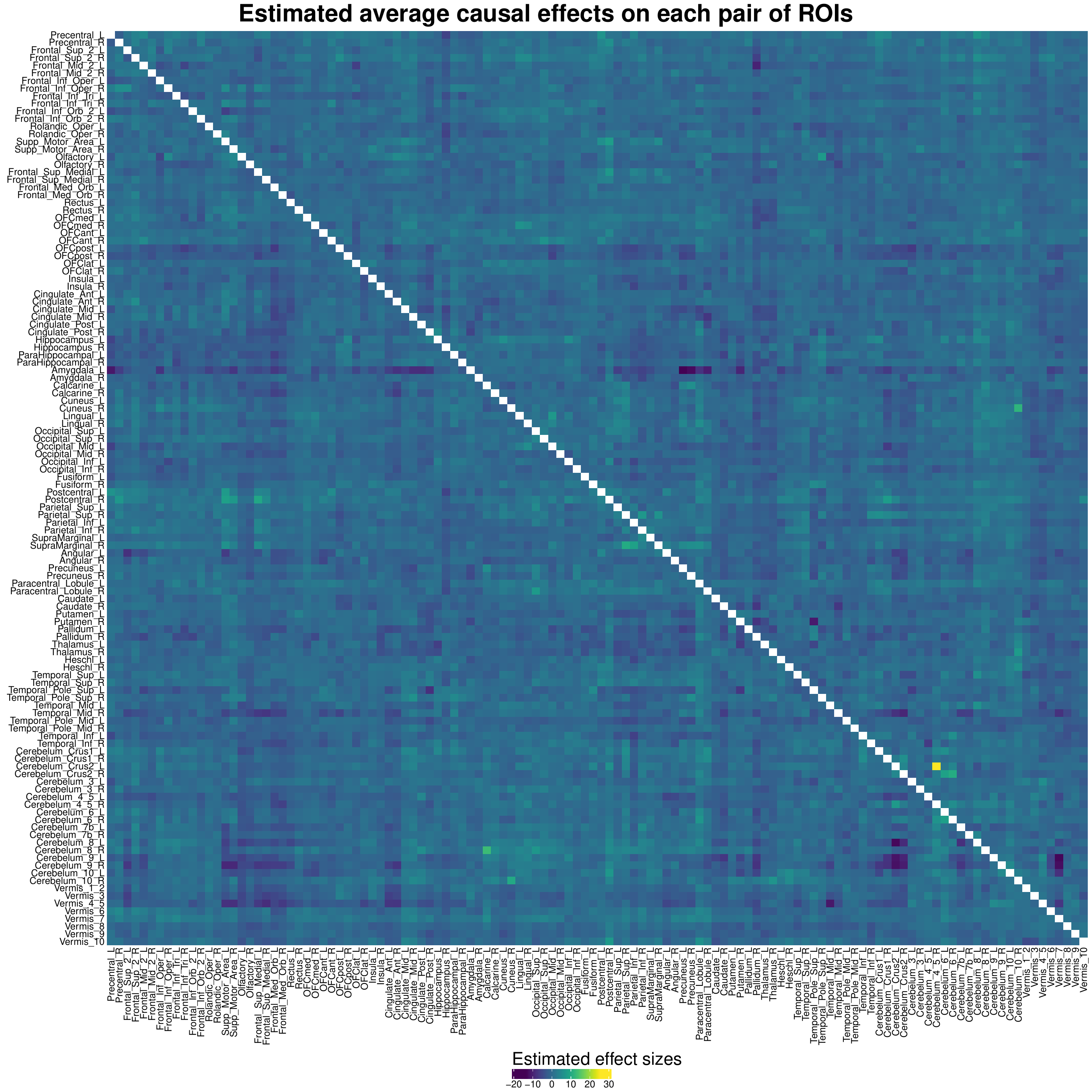}
    \caption{A heat map visualizing the full matrix of estimated causal effects of amyloid positivity on effective connectivity across pairs of ROIs ($p=120$).}
    \label{fig:ate_all}
\end{figure}

Among the $n = 81$ subjects, 37 (45.7\%) are amyloid positive. The amyloid-positive and amyloid-negative groups do not differ significantly with respect to sex (40.5\% vs.\ 36.4\% male, $p=0.876$) or years of education (16.22~$\pm$~2.45 vs. 16.36~$\pm$~2.72~years; $p= 0.659$). However, subjects in the amyloid-positive group are older on average than those in the amyloid-negative group (81.41~$\pm$~7.74 vs 74.66~$\pm$~7.77~years; $p < 0.001$), and the prevalence of APOE~$\varepsilon$4 carrier status is higher among amyloid-positive subjects (54.1\% vs. 27.3\%; $p=0.026$).
Figure~\ref{fig:ate_all} displays the estimated average effects across all outcome unit pairs of effective connectivity. Most estimated effects are close to zero, with a greater proportion taking negative values than positive values. After applying the multiple testing procedure, only a single directed effective connectivity from the left anterior cingulate cortex to the left middle occipital region remains statistically significant, with an estimated effect of $-3.33~( \text{simultaneous } 95\% \text{ CI}: [-6.50, -0.16])$. Overall, these results provide limited evidence for a global reduction in effective connectivity due to elevated amyloid burden. The finding may reflect limitations of resting-state fMRI data, including relatively low resolution and measurement noise. 

\section{Discussion}\label{section7:discussion}

The proposed framework for causal inference with derived outcomes from time-series data is broadly applicable in settings where the outcome of interest represents relationships between units, such as friendship, collaboration, or biological interaction, and is typically inferred from longitudinal observations. The methodology is particularly relevant when the primary research goal is to examine the causal effect of an external intervention on individuals or selected components of these relationship measures, rather than to study causal mechanisms among the relationships themselves, and when the outcomes are derived using well-established models that are widely adopted in the relevant  scientific domains (e.g., VARs). A key principle underlying our approach is that models used to construct relational outcomes should appropriately account for, and condition on, others' observed historical information when the inferential target is the intervention effect on one or more components of the derived outcomes. Specifically, by deriving outcome measures that condition on other time series, the proposed framework mitigates spurious associations between the intervention and individual outcome components, thereby enabling more interpretable and valid downstream causal inference.

However, the proposed framework has several limitations that point to meaningful directions for future research. 
First, the selection of conditioning time series currently relies on data from both intervention groups. Future work could develop procedures that define conditioning sets using only pre-intervention data, thus ensuring a clear separation between outcome construction and causal estimation. 
Second, the proposed variance estimator does not explicitly account for the additional variability introduced by the sample-splitting procedure, although we acknowledge discrepancies between $\widehat{\boldsymbol{\tau}}^{*}$ and $\widehat{\boldsymbol{\tau}}$, as well as between their corresponding variance estimators.   
Finally, our current implementation focuses on linear, weakly stationary time-series models, which are commonly used in connectivity analyses. While the use of widely adopted connectivity measures facilitates integration with existing practice, future work could investigate extensions to more flexible and general time-series-based measures with a principled causal inference framework. 

\section*{Acknowledgments}
Data collection and sharing for this project was funded by the Alzheimer's Disease Neuroimaging Initiative
(ADNI) (National Institutes of Health Grant U01 AG024904) and DOD ADNI (Department of Defense award
number W81XWH-12-2-0012). ADNI is funded by the National Institute on Aging, the National Institute of
Biomedical Imaging and Bioengineering, and through generous contributions from the following: AbbVie,
Alzheimer’s Association; Alzheimer’s Drug Discovery Foundation; Araclon Biotech; BioClinica, Inc.; Biogen;
Bristol-Myers Squibb Company; CereSpir, Inc.; Cogstate; Eisai Inc.; Elan Pharmaceuticals, Inc.; Eli Lilly and
Company; EuroImmun; F. Hoffmann-La Roche Ltd and its affiliated company Genentech, Inc.; Fujirebio; GE
Healthcare; IXICO Ltd.; Janssen Alzheimer Immunotherapy Research \& Development, LLC.; Johnson \&
Johnson Pharmaceutical Research \& Development LLC.; Lumosity; Lundbeck; Merck \& Co., Inc.; Meso
Scale Diagnostics, LLC.; NeuroRx Research; Neurotrack Technologies; Novartis Pharmaceuticals
Corporation; Pfizer Inc.; Piramal Imaging; Servier; Takeda Pharmaceutical Company; and Transition
Therapeutics. The Canadian Institutes of Health Research is providing funds to support ADNI clinical sites
in Canada. Private sector contributions are facilitated by the Foundation for the National Institutes of Health
(www.fnih.org). The grantee organization is the Northern California Institute for Research and Education,
and the study is coordinated by the Alzheimer’s Therapeutic Research Institute at the University of Southern
California. ADNI data are disseminated by the Laboratory for Neuro Imaging at the University of Southern
California.

\section*{Funding}
This research was supported by Grant Number 5P20GM103645 from the National Institute of General Medical Sciences and Grant Number R01AG075511 from the National Institute on Aging.

\bibliographystyle{Chicago}
\bibliography{references}

\newpage
\begin{bibunit}[Chicago]
\appendix

\begin{center}
{\Large
\textbf{Supplementary materials}
}
\end{center}

\setcounter{equation}{0}
\setcounter{figure}{0}
\setcounter{table}{0}
\setcounter{page}{1}
\setcounter{section}{0}
\renewcommand{\theequation}{A\arabic{equation}}
\renewcommand{\thefigure}{A\arabic{figure}}
\renewcommand{\thesection}{A\arabic{section}}
\renewcommand{\thetable}{A\arabic{table}}
\renewcommand{\thefigure}{A\arabic{figure}}

\section{Details of conditional Granger causality models}
The effective connectivity outcome measures are constructed separately for each subject. We therefore suppress the subject index $i$ throughout this section for notational simplicity. 

Let $\mathbf{X}_{\cdot \mathcal{J}}^{(t)}$ denote the length-$\ell$ vector of time series corresponding to outcome units (e.g., brain regions) in the index set $\mathcal{J}$ at time $t$. The reduced VAR system for $X_{\cdot j_2}^{(t)}$, conditional on $\mathbf{X}_{\cdot \mathcal{J}}^{(t)}$ for $t=[(r+1):T]$, can be represented as follows: 
\begin{eqnarray} \label{eq:var1}
\begin{pmatrix}
    X_{\cdot j_2}^{(t)} \\
    \mathbf{X}_{\cdot \mathcal{J}}^{(t)}
\end{pmatrix}
= \mathbf{C}^{(t)} + \sum_{k=1}^r \bm{A}_k \begin{pmatrix}
    X_{\cdot j_2}^{(t-k)} \\
    \mathbf{X}_{\cdot \mathcal{J}}^{(t-k)}
\end{pmatrix} + 
\begin{pmatrix}
    \varepsilon_{j_2}^{(t)} \\
    \boldsymbol{\varepsilon}_{\mathcal{J}}^{(t)}
\end{pmatrix},
\end{eqnarray}
where $ \mathbf{C}^{(t)} $ is a vector of length of $l+1$ that includes an intercept, and $\bm{A}_k$ are the $(l+1) \times (l+1)$ transition matrices for lag $k=[1:r]$. The vector $\boldsymbol{\varepsilon}^{(t)}=(\varepsilon_{j_2}^{(t)},(\boldsymbol{\varepsilon}_{\mathcal{J}}^{(t)})^\top)^\top$ includes the error terms at time $t$, which satisfies $\mathbb{E}(\boldsymbol{\varepsilon}^{(t)}) =\mathbf{0}$. Let $\boldsymbol{\Sigma}$ denote the $(l+1) \times (l+1)$ covariance matrix of $\boldsymbol{\varepsilon}^{(t)} $ for $t=[(r+1): T]$. This matrix can be partitioned as 
\begin{eqnarray*}
\boldsymbol{\Sigma}=\mathbb{E}(\boldsymbol{\varepsilon}^{(t)} (\boldsymbol{\varepsilon}^{(t)})^\top)= \begin{pmatrix}
    var(\boldsymbol{\varepsilon}^{(t)}) & cov(\boldsymbol{\varepsilon}^{(t)},\boldsymbol{\varepsilon}_{\mathcal{J}}^{(t)})\\
    cov(\boldsymbol{\varepsilon}_{\mathcal{J}}^{(t)}, \varepsilon_{j_2}^{(t)}) & var(\boldsymbol{\varepsilon}_{\mathcal{J}}^{(t)}).
\end{pmatrix}
\end{eqnarray*}

The full model for $X_{\cdot j_1}^{(t)}, X_{\cdot j_2}^{(t)}$ and $\mathbf{X}_{\cdot \mathcal{J}}^{(t)}$ can be written as
\begin{eqnarray} \label{eq:var2}
\begin{pmatrix}
    X_{\cdot j_1}^{(t)} \\
    X_{\cdot j_2}^{(t)} \\
   \mathbf{X}_{\cdot \mathcal{J}}^{(t)}
\end{pmatrix}
=  \mathbf{C}'^{(t)} + \sum_{k=1}^r \bm{A}'_k \begin{pmatrix}
    X_{\cdot j_1}^{(t-k)} \\
    X_{\cdot j_2}^{(t-k)} \\
    \mathbf{X}_{\cdot \mathcal{J}}^{(t-k)}
\end{pmatrix} + 
\begin{pmatrix}
    {\varepsilon'}_{j_1}^{(t)} \\
    {\varepsilon'}_{j_2}^{(t)} \\
    {\boldsymbol{\varepsilon}'}_{\mathcal{J}}^{(t)}
\end{pmatrix}.
\end{eqnarray}
Here, $\mathbf{C}'^{(t)}$ is a vector of length of $l+2$ that includes an intercept, and $\bm{A}'_k$ are the $(l+2) \times (l+2)$ transition matrices for lags $k=[1:r]$ in the full VAR system. 
The error terms at time $t$ is $\boldsymbol{\varepsilon'^{(t)}}=(\varepsilon'^{(t)}_{j_1}, \varepsilon'^{(t)}_{j_2},(\boldsymbol{\varepsilon}'^{(t)}_{\mathcal{J}})^\top)^\top$, which satisfies $\mathbb{E}(\boldsymbol{\varepsilon'^{(t)}}) = \mathbf{0}$. Let  $\boldsymbol{\Sigma}^{\prime}$ denote the $(l+2) \times (l+2)$ covariance matrix of $\boldsymbol{\varepsilon'^{(t)}}$ for $t = [(r+1):T]$, which can be partitioned as
\begin{eqnarray*}
\boldsymbol{\Sigma}'=\mathbb{E}(\boldsymbol{\varepsilon'^{(t)}}(\boldsymbol{\varepsilon'^{(t)}})^\top) = \begin{pmatrix}
    var(\varepsilon'^{(t)}_{j_1}) & cov( \varepsilon'^{(t)}_{j_1},\varepsilon'^{(t)}_{j_2}) & cov( \varepsilon'^{(t)}_{j_1},\boldsymbol{\varepsilon}'^{(t)}_{\mathcal{J}})\\
    cov(\varepsilon'^{(t)}_{j_2}, \varepsilon'^{(t)}_{j_1}) & var(\varepsilon'^{(t)}_{j_2}) &
    cov(\varepsilon'^{(t)}_{j_2},\boldsymbol{\varepsilon}'^{(t)}_{\mathcal{J}}) \\
    cov(\boldsymbol{\varepsilon}'^{(t)}_{\mathcal{J}}, \varepsilon'^{(t)}_{j_1}) &
    cov(\boldsymbol{\varepsilon}'^{(t)}_{\mathcal{J}},\varepsilon'^{(t)}_{j_2}) & var(\boldsymbol{\varepsilon}'^{(t)}_{\mathcal{J}}). 
\end{pmatrix}
\end{eqnarray*}

Under the following two assumptions, the parameters of each equation in the VAR system in~\eqref{eq:var1} and~\eqref{eq:var2} can be estimated by ordinary least squares. For resting-state fMRI data, the weak-stationarity assumption is commonly adopted and is often reasonable after appropriate preprocessing. 
\begin{bassumption} \label{as:assumptionB1}
The error terms in the VAR models are Gaussian with time-invariant covariance matrices.
    \begin{enumerate}[label=(\roman*)]
        \item Multivariate normality: The error terms satisfy $\boldsymbol{\varepsilon}^{(t)} \sim MVN_{l+1}(\mathbf{0}, \boldsymbol{\Sigma})$ and $\boldsymbol{\varepsilon}'^{(t)} \sim MVN_{l+2}(\mathbf{0}, \boldsymbol{\Sigma}')$.
        \item Noises serial correlation: The covariance matrices $\boldsymbol{\Sigma}$ and $\boldsymbol{\Sigma}'$ do not depend on $t$.
    \end{enumerate}
\end{bassumption}

\begin{bassumption}[Weak Stationarity] \label{as:assumptionB2}
   The time series $X_{\cdot j_1}^{(t)}, X_{\cdot j_2}^{(t)}$, and each series in $\mathbf{X}_{\cdot \mathcal{J}}^{(t)}$ are weakly stationary. That is, each has constant mean  and finite second moment, and the autocovariance function depend only on lag. 
\end{bassumption}

\section{Lemmas and Proofs}
\label{sec:proof}

\begin{lemma}[Tightness] \label{lemma: lemma1}
    For conditional Granger causality, the test statistic is an F-statistic . Let $\widehat{F}_{(j_1 j_2)}(\mathcal{D}_2) \mid \widehat{m}(\mathcal{D}_1)$ denote  the estimator of this statistic computed from $\mathcal{D}_{2}$ (with time length $T-T_0$), conditional on the model fitted on $\mathcal{D}_{1}$. Under the null hypothesis that $\widetilde{\mathbf{X}}_{\cdot j_1}$ does not Granger-cause $\widetilde{\mathbf{X}}_{\cdot j_2}$ conditioning on $\widetilde{\mathbf{X}}_{\cdot \mathcal{J}}$,
\begin{eqnarray*}
 \widehat{F}_{(j_1 j_2)}(\mathcal{D}_2) \mid \widehat{m}(\mathcal{D}_1) \sim F(r,(T-T_0-r)-(\ell+2)r-1).
 \end{eqnarray*}
Since $r$ and $\ell$ are fixed, as $T - T_0 \rightarrow \infty$, $\widehat{F}_{(j_1 j_2)}(\mathcal{D}_2) \mid \widehat{m}(\mathcal{D}_1) \overset{d}{\rightarrow} \chi^2_r/r$, under the null hypothesis.  
\end{lemma}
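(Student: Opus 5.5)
Since the model specification $\widehat{m}(\mathcal{D}_1)=(\mathcal{J},r)$ is a function of $\mathcal{D}_1$ only, I will first fix the selected model and treat it as deterministic. On $\mathcal{D}_2$, $\widehat{F}_{(j_1 j_2)}(\mathcal{D}_2)\mid\widehat{m}(\mathcal{D}_1)$ is then the classical nested-regression $F$-statistic for the $j_2$-th equation of the full VAR~\eqref{eq:var2}. The regressors are the intercept and $r$ lags of $(X_{\cdot j_1},X_{\cdot j_2},\mathbf{X}_{\cdot\mathcal{J}})$, so there are $(\ell+2)r+1$ columns. The null hypothesis of no conditional Granger causality fixes $r$ linear restrictions: the lag coefficients of $X_{\cdot j_1}$ in that equation are zero. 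With $T-T_0-r$ usable observations, the residual degrees of freedom are $(T-T_0-r)-(\ell+2)r-1$, which gives the stated parameters.

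The exact $F$ distribution follows the standard route. Under Assumption~\ref{as:assumptionB1}, the innovations $\varepsilon'^{(t)}_{j_2}$ are Gaussian with constant variance $\sigma^2$ and are independent of the past, so the regressors are predetermined. Write $M$ for the residual-maker of the full design and $M_0$ for that of the reduced design. Under $H_0$ the reduced model is correct, so $\mathrm{RSS}_0-\mathrm{RSS}_1=\varepsilon^\top(M_0-M)\varepsilon$ and $\mathrm{RSS}_1=\varepsilon^\top M\varepsilon$. Here $M_0-M$ and $M$ are orthogonal idempotent matrices of ranks $r$ and $(T-T_0-r)-(\ell+2)r-1$. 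Conditional on the design, Cochran's theorem gives independent $\sigma^2\chi^2$ variables with those degrees of freedom, and their ratio is $F$.

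The main obstacle is that the design contains lagged dependent variables, so it is not independent of $\varepsilon$. The $F$ law is therefore exact only in the fixed-regressor idealization that the statement implicitly adopts. To make the claim rigorous I would prove the asymptotic version directly, which is all that tightness needs. By Assumption~\ref{as:assumptionB2} and stability of the VAR, ergodic theorems give $(T-T_0)^{-1}\mathbf{X}^\top\mathbf{X}\to Q$ in probability, with $Q$ positive definite. A martingale CLT for $(T-T_0)^{-1/2}\mathbf{X}^\top\varepsilon$ then yields asymptotic normality of the OLS restricted coefficients. The Wald form $r\widehat{F}$ converges to $\chi^2_r$, and $\mathrm{RSS}_1/\mathrm{df}\to\sigma^2$ in probability, so Slutsky's theorem applies.

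The last step is the limit. Because $r$ and $\ell$ range over the finite set $\mathcal{M}^*$, the denominator degrees of freedom $d_2=(T-T_0-r)-(\ell+2)r-1\to\infty$. Writing $F(r,d_2)=(\chi^2_r/r)/(\chi^2_{d_2}/d_2)$, the law of large numbers gives $\chi^2_{d_2}/d_2\to1$, and Slutsky's theorem gives convergence to $\chi^2_r/r$. To remove the conditioning on $\widehat{m}$, I would note that the convergence holds for each $m\in\mathcal{M}^*$ and that $\Pr(\widehat{m}\in\mathcal{M}^*)\to1$. Since $\mathcal{D}_2$ is not independent of $\mathcal{D}_1$, I would work with the conditional law given $\mathcal{D}_1$, or equivalently with $\widetilde{\mathcal{D}}_2$, and then transfer the result via Theorem~\ref{thm:theorem3}.
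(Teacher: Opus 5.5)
Your proposal is correct, and it is more careful than the paper's own treatment. The paper gives no real argument for the lemma. It asserts that $\widehat F_{(j_1 j_2)}(\mathcal D_2)\mid\widehat m(\mathcal D_1)$ is the classical nested-model $F$-statistic with exact law $F(r,(T-T_0-r)-(\ell+2)r-1)$. It then lets the denominator degrees of freedom diverge to obtain $\chi^2_r/r$, and deduces tightness from convergence to a non-degenerate limit. It addresses neither the lagged regressors nor the conditioning on $\widehat m(\mathcal D_1)$.

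You share that skeleton: the count of $(\ell+2)r+1$ regressors and $r$ restrictions, and $F(r,d_2)\to\chi^2_r/r$ via $\chi^2_{d_2}/d_2\to 1$. But you rightly note that the exact $F$ law in the first display does not hold for autoregressions. The design is only predetermined, so Cochran's theorem conditional on the design is unavailable. The paper's main text itself only claims the $F$ law asymptotically.

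Your substitute establishes the $\chi^2_r/r$ limit directly, which is all that tightness requires. It uses ergodic convergence of $(T-T_0)^{-1}\mathbf X^\top\mathbf X$, a martingale CLT for the score, the exact identity between $r\widehat F$ and the Wald statistic with $\mathrm{RSS}_1/d_2$ as variance estimate, and Slutsky. The price is that you must make explicit what the paper leaves implicit: a stable, correctly specified VAR with homoskedastic martingale-difference innovations and a nonsingular limit $Q$. You also forgo the finite-sample $F$ claim, which is not true anyway. Once you have the direct limit, your separate $F(r,d_2)\to\chi^2_r/r$ step is superfluous.

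The final step needs tightening. The conditional law of $\mathcal D_2$ given $\mathcal D_1$ is not equivalent to the law of the independent copy $\widetilde{\mathcal D}_2$; their asymptotic agreement is precisely what sample-splitting validity asserts. Moreover, the paper obtains that validity from Theorem~\ref{thm:theorem3} only ``together with tightness'', that is, by using this lemma. Transferring the result through Theorem~\ref{thm:theorem3} would therefore be circular unless you first prove tightness for $\widetilde{\mathcal D}_2$ on its own.

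Use the direct conditional argument instead. Given $\mathcal D_1$, the selected model is fixed, and $\mathcal D_2$ is the same stable VAR started from an initial state determined by $\mathcal D_1$. That initial state contributes only $O_p(1)$ terms to sums normalized by $T-T_0$ or $\sqrt{T-T_0}$. Your LLN and CLT therefore go through conditionally for each of the finitely many $m\in\mathcal M^*$.
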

\noindent Because convergence in distribution to a non-degenerate limit implies tightness, the sequence of conditional Granger causality estimators is tight. That is, for any $\delta > 0 $, there exists a compact interval such that the probability that $\widehat{F}_{(j_1 j_2)}(\mathcal{D}_2) \mid \widehat{m}(\mathcal{D}_1)$ lies outside this interval is smaller than $\delta$.

\begin{proof}[Proof of Theorem~\ref{thm:theorem1}]

\begin{eqnarray*}   
   \tau^{*} & = & \mathbb{E}_{p} \left[ \mathbb{E}_{p}(\widehat{\mathbf{Y}}_i(1) \mid  \mathbf{W}_i) - \mathbb{E}_{p}(\widehat{\mathbf{Y}}_i (0) \mid \mathbf{W}_i) \right] \quad \text{(Assumption \ref{as:assumption4})}\\
    & = & \mathbb{E}_{p}\left[ \mathbb{E}_{p}(\mathbb{E}_{i}[\widehat{\mathbf{Y}}_i(1) \mid  \widetilde{\mathbf{X}}_i(1), \mathbf{W}_i]) - \mathbb{E}_{p}(\mathbb{E}_{i}[\widehat{\mathbf{Y}}_i(0) \mid \widetilde{\mathbf{X}}_i(0), \mathbf{W}_i]) \right]\\
    & = & \mathbb{E}_{p}\left[ \mathbb{E}_{p}(\mathbf{Y}_{i}(1)+\boldsymbol{\Delta}_{i}(1)  \mid  \mathbf{W}_i) - \mathbb{E}_{p}(\mathbf{Y}_{i}(0)+\boldsymbol{\Delta}_{i}(0) \mid \mathbf{W}_i) \right] \\
    & = & \mathbb{E}_{p}[ \mathbb{E}_{p}(\mathbf{Y}_{i}(1) - \mathbf{Y}_{i}(0)  \mid  \mathbf{W}_i)] + \mathbb{E}_{p}[\mathbb{E}_{p}(\boldsymbol{\Delta}_{i}(1)-\boldsymbol{\Delta}_{i}(0) \mid \mathbf{W}_i)]\\ 
   & = & \mathbb{E}_{p}[\mathbf{Y}_{i}(1) - \mathbf{Y}_{i}(0)] + o_{p}(1) \quad \text{(Condition \ref{con:condition1})}.
\end{eqnarray*}
\end{proof}

\begin{proof}[Proof of Theorem~\ref{thm:theorem3}]
Under Assumption~\ref{as:assumptionB1}, $\varepsilon_{j_2}^{(t)} \sim N(0, var(\varepsilon_{j_2}^{(t)}))$, and $\varepsilon'^{(t)}_{j_2} \sim N(0, var(\varepsilon'^{(t)}_{j_2}))$, with finite variances. Thus, $\mathbb{E}[(\varepsilon_{j_2}^{(t)})^2]=var(\varepsilon_{j_2}^{(t)}), \mathbb{E}[(\varepsilon'^{(t)}_{j_2})^2]=var(\varepsilon'^{(t)}_{j_2})$. By Stein's Lemma, $var[(\varepsilon_{j_2}^{(t)})^2] =2var^2(\varepsilon_{j_2}^{(t)}) < \infty$, and similarly for $var[\varepsilon'^{(t)}_{j_2})^2] < \infty$. 

Assume that the VAR models are correctly specified. Under Assumptions~\ref{as:assumptionB1} and~\ref{as:assumptionB2}, and law of large numbers implies that, as $T - T_0 \rightarrow \infty$ and $T - T_0-L_{T - T_0} \rightarrow \infty$,
\begin{eqnarray*}
\frac{\sum_{t=T_0+r+1}^{T} (\widehat{\varepsilon}_{j_2}^{(t)})^2}{(T-T_0-r)-(\ell+2)r-1} &\overset{p}{\rightarrow}& var(\varepsilon_{j_2}^{(t)}), \\   \frac{\sum_{t=T_0+r+1+L_{T - T_0}}^{T} (\widehat{\varepsilon}_{j_2}^{(t)})^2}{(T-T_0-r-L_{T - T_0})-(\ell+2)r-1} &\overset{p}{\rightarrow}
& var(\varepsilon_{j_2}^{(t)}),
\end{eqnarray*}
and likewise,
\begin{eqnarray*}
\frac{\sum_{t=T_0+r+1}^{T} (\widehat{\varepsilon}'^{(t)}_{j_2})^2}{(T-T_0-r)-(\ell+2)r-1} &\overset{p}{\rightarrow}&
var(\varepsilon'^{(t)}_{j_2}), \\ \frac{\sum_{t=T_0+r+1+L_{T - T_0}}^{T} (\widehat{\varepsilon}'^{(t)}_{j_2})^2}{(T-T_0-r-L_{T - T_0})-(\ell+2)r-1} &\overset{p}{\rightarrow}& var(\varepsilon'^{(t)}_{j_2}).
\end{eqnarray*}

Consider the ratio of the two conditional $F$-statistics, each represented as a function of the quantities above. After algebraic rearrangement, it can be written as a product of ratios involving sample residual variances and degree-of-freedom adjustments. Each component converges in probability on one by the law of large numbers and the continuous mapping theorem. Consequently, 
\begin{eqnarray*}
  \frac{\widehat{F}_{(j_1 j_2)}(\mathcal{D}_2) \mid \widehat{m}(\mathcal{D}_1)}{\widehat{F}_{(j_1 j_2)}(\mathcal{D}'_2) \mid \widehat{m}(\mathcal{D}_1)} 
& \overset{p}{\longrightarrow} & 1
\end{eqnarray*}
as $ T - T_0 \rightarrow \infty$ and $T - T_0-L_{T - T_0} \rightarrow \infty$. 
\end{proof}

\begin{proof}[Proof of Theorem~\ref{thm:theorem4}]

Let $Z_i$ follow a Bernoulli distribution with  $Pr(Z_i=1 \mid \mathbf{W}_i) \coloneqq \sigma(\eta_i)$, where $\eta_i = \mathbf{W}_i^\top \bm{\beta}$ and $\sigma(\cdot)$ is a smooth link function mapping $\eta_i$ to $(0,1)$. Define $\sigma_i = \sigma(\eta_i)$, $\sigma_i' = \sigma'(\eta_i)$, and $\omega_i = \dfrac{\sigma_i'}{\sigma_i(1-\sigma_i)}$. The estimator $\widehat{\bm{\beta}}$ denotes the maximum likelihood estimator (MLE) of $\bm{\beta}$, obtained by maximizing 
\begin{eqnarray*}
\sum_{i=1}^{n} \left\{ Z_i \log\sigma_i + (1-Z_i) \log(1-\sigma_i)\right\}.
\end{eqnarray*}
The corresponding score function is
\begin{eqnarray*}
\mathbf{S} = \sum_{i=1}^{n} (Z_i - \sigma_i)\omega_i\mathbf{W}_i,
\end{eqnarray*}
and the Fisher information matrix is
\begin{eqnarray*}
\mathcal I_{\bm{\beta}} = 
\mathbb{E} \left[
\frac{(\sigma_i')^2}{\sigma_i(1-\sigma_i)}\,\mathbf W_i \mathbf W_i^\top
\right].
\end{eqnarray*}
\noindent Under standard regularity conditions, 
\begin{equation}  \label{eq:expan_beta}
\widehat{\bm{\beta}} - \bm{\beta} 
= \mathcal I_{\bm{\beta}}^{-1}
\frac{1}{n}\sum_{i=1}^n (Z_i - \sigma_i)\,\omega_i\,\mathbf W_i
+ o_{p}(n^{-1/2}).
\end{equation}
\noindent The corresponding IPW estimator is 
\begin{align*}
    \widehat{\tau}^{*}_{(j_{1} j_{2})} = \frac{1}{n}\sum_{i=1}^n \left\{ \frac{Z_i \widehat{Y}_{i (j_{1} j_{2})}}{\sigma(\mathbf{W}_i^\top \widehat{\bm{\beta}})} - \frac{(1-Z_i)\widehat{Y}_{i (j_{1} j_{2})}}{1-\sigma(\mathbf{W}_i^\top \widehat{\bm{\beta}})} \right\},
\end{align*}
which depends on $\widehat{\bm{\beta}}$. Define 
\begin{eqnarray*}
T(\bm{\beta}) = \frac{1}{n}\sum_{i=1}^n 
\left\{
\frac{Z_i \widehat{Y}_i}{\sigma_i}
-
\frac{(1-Z_i)\widehat{Y}_i}{1-\sigma_i}
\right\}.
\end{eqnarray*}
Its gradient is
\begin{equation} \label{eq:gradient_Tbeta}
\nabla_{\bm{\beta}} T(\bm{\beta})
= \frac{1}{n}\sum_{i=1}^n 
\left[
-\,Z_i\widehat{Y}_i\,\frac{\sigma_i'}{\sigma_i^2}
- (1-Z_i)\widehat{Y}_i\,\frac{\sigma_i'}{(1-\sigma_i)^2}
\right]\mathbf W_i.
\end{equation}
A first-order Taylor expansion around $\bm{\beta}$ gives 
\begin{align*}
\widehat{\tau}^{*}_{(j_{1} j_{2})} -  \frac{1}{n}\sum_{i=1}^n \left\{ \frac{Z_i \widehat{Y}_{i (j_{1}j_{2})}}{\sigma(\mathbf{W}_i^\top \bm{\beta})} - \frac{(1-Z_i)\widehat{Y}_{i (j_{1} j_{2})}}{1-\sigma(\mathbf{W}_i^\top \bm{\beta})} \right\}= \nabla_{\bm{\beta}} T(\bm{\beta})^\top (\widehat{\bm{\beta}} - \bm{\beta}) + o_{p}(n^{-1/2}).
\end{align*}
Substituting the expansions the above yields
\begin{eqnarray*}   
\widehat{\tau}^{*}_{(j_{1} j_{2})} - \tau^{*}_{(j_{1} j_{2})}     &=& 
\frac{1}{n}\sum_{i=1}^{n} h_{i(j_{1} j_{2})} + o_{p}(n^{-1/2}),
\end{eqnarray*}
where
\begin{align*} 
h_{i(j_{1} j_{2})} & = \frac{Z_i \widehat{Y}_{i (j_{1} j_{2})}}{\sigma(\mathbf{W}_i^\top \bm{\beta})} - \frac{(1-Z_i)\widehat{Y}_{i (j_{1} j_{2})}}{1-\sigma(\mathbf{W}_i^\top \bm{\beta})} - \tau^{*}_{(j_{1} j_{2})}\\
& - \left[ \frac{1}{n}\sum_{i=1}^n 
\left\{Z_i\widehat{Y}_{i (j_{1} j_{2})}\frac{\sigma_i'}{\sigma_i^2}
+ (1-Z_i)\widehat{Y}_{i (j_{1} j_{2})} \frac{\sigma_i'}{(1-\sigma_i)^2}
\right\}\mathbf W_i^\top \right] \mathcal I_{\bm{\beta}}^{-1}
(Z_i - \sigma_i) \omega_i \mathbf W_i.
\end{align*}
This linear expansion implies that $\widehat{\tau}^{*}_{(j_{1} j_{2})}$ is asymptotically normal with the influence function $h_{i(j_{1} j_{2})}, i=[1:n]$. If the propensity score model is correctly specified, when under Assumptions \ref{as:assumption1}--\ref{as:assumption4}, we have $\sqrt{n}(\widehat{\tau}^{*}_{(j_{1} j_{2})} - \tau^{*}_{(j_{1} j_{2})})  \overset{d}{\to} N(0,\mathcal{V}_{(j_{1} j_{2})})$, as $n \rightarrow \infty$, where $\mathcal{V}_{(j_{1} j_{2})} = \mathbb{E}(n^{-1}\sum_{i=1}^{n}h_{i(j_{1} j_{2})}^2)$.

\end{proof}

The proof of Theorem~\ref{thm:theorem4} follows similar arguments to those in~\cite{qiu_unveiling_2024}. In this work, the proposed IPW estimator based on the derived multivariate outcomes $\widehat{\mathbf{Y}}$ targets the estimand defined in terms of the potential outcomes for the unobserved outcome $\mathbf{Y}$, denoted by $\boldsymbol{\tau}$. Our inference further targets the estimand defined with respect to the potential outcomes associated with the derived outcomes $\widehat{\mathbf{Y}}_{i}$, denoted by $\boldsymbol{\tau}^{\ast}$. Moreover, the link function $\sigma(\cdot)$ can be any smooth mapping from $\eta_i$ to $(0,1)$, and is not restricted to the logistic form.

\begin{Aremark}
When $\sigma(u) = 1/(1+e^{-u})$, we have $\sigma'(u)=\sigma(u)\{1-\sigma(u)\}$, so that $\omega_i = 1$. The expression for $h_{i(j_{1} j_{2})}$ above then reduces exactly to the standard logistic regression case: 
\begin{eqnarray*}
&& \frac{Z_i \widehat{Y}_{i (j_{1} j_{2})}}{\sigma(\mathbf{W}_i^\top \bm{\beta})} - \frac{(1-Z_i)\widehat{Y}_{i (j_{1} j_{2})}}{1-\sigma(\mathbf{W}_i^\top \bm{\beta})} - \tau^{*}_{(j_{1} j_{2})}\\
& -&  \left[ \frac{1}{n} \sum_{i=1}^{n}\left\{\frac{Z_i \widehat{Y}_{i (j_{1} j_{2})}(1-\sigma(\mathbf{W}_i^\top \bm{\beta}))}{\sigma(\mathbf{W}_i^\top \bm{\beta})} + \frac{(1-Z_i) \widehat{Y}_{i (j_{1} j_{2})}\sigma(\mathbf{W}_i^\top \bm{\beta})}{1-\sigma(\mathbf{W}_i^\top \bm{\beta})}\right\}\mathbf{W}_i^\top \right] \mathcal{I}_{\bm{\beta}}^{-1}(Z_i - \sigma(\mathbf{W}_i^\top \bm{\beta}))\mathbf{W}_i.
\end{eqnarray*}
\end{Aremark}

\begin{proof}[Proof of Corollary~\ref{corollary1}]
By the consistency of the MLE $\widehat{\bm{\beta}} \overset{p}{\longrightarrow} \bm{\beta}$ and by continuity of the smooth function $\sigma(\cdot)$, it follows that
$ \sigma(\mathbf{W}_i^\top \widehat{\bm{\beta}})\overset{p}{\longrightarrow} \sigma(\mathbf{W}_i^\top \bm{\beta})$ and $\widehat{\mathcal{I}} \overset{p}{\longrightarrow}  \mathcal{I}_{\bm{\beta}}$. For $i \in [1:n]$, we therefore have $\widehat{h}_{i(j_1j_2)} \overset{p}{\to}  h_{i(j_1 j_2)}$ and $\widehat{h}_{i(j_1 j_2)}^2 \overset{p}{\longrightarrow}  h_{i(j_1 j_2)}^2$. Thus, the plug-in variance estimator $\widehat{\mathcal{V}}_{(j_1 j_2)} = \frac{1}{n} \sum_{i=1}^n \widehat{h}_{i(j_1 j_2)}^2$ is consistent for $\mathcal{V}_{(j_1 j_2)} = \mathbb{E}\{ h_{i(j_1  j_2)}^2 \}$.
\end{proof}

\section{Details about simultaneous inference}
\label{sec:multiinference}

After obtaining Gaussian approximations for each component of $\boldsymbol{\tau}^{\ast}$, we proceed to conduct simultaneous inference on $\boldsymbol{\tau}^{\ast}$. Due to the high dimensionality and complex dependency structures among its components, we apply the Gaussian multiplier bootstrap procedure~\citep{chernozhukov2013}. 

Let $\mathcal{M}$ denote the maximum absolute standardized coordinate $\widehat{\boldsymbol{\tau}}^{*}_{(j_{1} j_{2})}$ over indices $(j_{1}, j_{2})$ with $\widehat{\mathcal{V}}_{(j_{1} j_{2})} > c$ for some pre-specified $c>0$:
\begin{equation}
    \mathcal{M} = \max_{(j_1, j_2):\, \widehat{\mathcal{V}}_{(j_1 j_2)} > c} \left| \sqrt{n}(\widehat{\tau}^{*}_{(j_{1} j_{2})} - \tau^{*}_{(j_{1} j_{2})}) \, \widehat{\mathcal{V}}_{(j_{1} j_{2})}^{-1/2} \right|.
\end{equation}
To approximate the distribution of $\mathcal{M}$, generate i.i.d. standard normal multiplies $\xi_{i,b},~i \in [1:n]$ and bootstrap replication $b \in [1:B]$, and calculate
\begin{equation}
    \mathcal{M}_{b} = \max_{(j_1, j_2):\, \widehat{\mathcal{V}}_{(j_1 j_2)} > c} \left| \frac{1}{\sqrt{n}} \sum_{i=1}^{n} \xi_{i,b} \widehat{h}_{i(j_1 j_2)} \, \widehat{\mathcal{V}}_{(j_1 j_2)}^{-1/2} \right|,
\end{equation}
Let $q_{1-\alpha}$ denote the $(1 - \alpha)$- quantile of $\{ \mathcal{M}_{b}, b \in [1:B]\}$. Simultaneous $(1 - \alpha)$ confidence intervals for $\boldsymbol{\tau}^{*}$ are then given by
\begin{equation}
    \left[ \widehat{\tau}^{*}_{(j_1 j_2)} - \frac{q_{1-\alpha} \widehat{\mathcal{V}}_{(j_1 j_2)}^{1/2}}{\sqrt{n}},~ \widehat{\tau}^{*}_{(j_1 j_2)} + \frac{q_{1-\alpha} \widehat{\mathcal{V}}_{(j_1 j_2)}^{1/2}}{\sqrt{n}} \right], 
\end{equation}
for all $(j_1, j_2)$ such that $\widehat{\mathcal{V}}_{(j_1 j_2)} > c$.

Let $\mathcal{S}_0$ denote the initial set of null hypotheses corresponding to components $(j_1, j_2)$ with $\widehat{\mathcal{V}}_{(j_1 j_2)} > c$ and set $\mathcal{R}_0=\emptyset$. For each iteration $k \geq 1$, define
\begin{eqnarray*}
\mathcal{M}^{(k)} = \max_{(j_1, j_2) \in \mathcal{S}_{k-1}} \left| \sqrt{n} \, \widehat{\tau}^{*}_{(j_1 j_2)} \, / \, \widehat{\mathcal{V}}_{(j_1 j_2)}^{1/2} \right|,
\end{eqnarray*}
and let $(j^*_1, j^*_2)^{(k)}$ denote the index attaining this maximum. Let $q^{(k)}_{1-\alpha}$ be the $(1 - \alpha)$-quantile obtained from the Gaussian multiplier bootstrap procedure. If $\mathcal{M}^{(k)} > q^{(k)}_{1-\alpha}$, update
\begin{eqnarray*}
\mathcal{R}_k = \mathcal{R}_{k-1} \cup \{ (j^*_1, j^*_2)^{(k)} \}, \quad \mathcal{S}_k = \mathcal{S}_{k-1} \setminus \{ (j^*_1, j^*_2)^{(k)} \};
\end{eqnarray*}
otherwise, terminate the step-down process. Let  $\mathcal{R}_{k_0}$ denote the resulting rejection set (discovery set) for some $k_0 \geq 1$. 
We then augment $\mathcal{R}_{k_0}$ by adding the next $\left\lfloor d \, |\mathcal{R}_{k_0}| / (1 - d) \right\rfloor$ most significant remaining hypotheses in $\mathcal{S}_{k_0}$, ranked by the absolute standardized test statistics $\left| \sqrt{n} \widehat{\tau}^{*}_{(j_1 j_2)} / \widehat{\mathcal{V}}_{(j_1 j_2)}^{1/2} \right|$.  Let $\mathcal{R}$ denote the final discovery set after combining the step-down and augmentation steps. As shown in \cite{qiu_unveiling_2024}, this procedure can control the FDP exceedance rate at $\alpha$ and has a power converging to 1, provided that all non-zero $\tau^{\ast}_{(j_1 j_2)}$ have non-degenerated asymptotic variances. 

\section{Details and additional simulation results}\label{supplementary:simulation}
\subsection{Details of the data-generating procedure}

We consider multivariate time series of dimension $p\in\{21,51\}$. For each subject $i=[1:n]$, we generate two potential outcomes of length $T=1000$, denoted by $\widetilde{\mathbf{X}}_i(z)$ for $z\in \{0,1\}$, from a VAR(1) model. The transition matrix follows a block structure, which is partitioned evenly into $3\times3$ blocks, each of size $p/3$:
$$ \bm{A} = 
\begin{pmatrix}
    \bm{A}_{11} &  \bm{A}_{12} & \bm{A}_{13}\\
    \bm{A}_{21} &  \bm{A}_{22} &  \bm{A}_{23} \\
    \bm{A}_{31} &  \bm{A}_{32} & \bm{A}_{33}
\end{pmatrix}.
$$
\noindent The diagonal blocks $(\bm{A}_{11}, \bm{A}_{22}, \bm{A}_{33})$ represent within-hub autoregressive relationships for three interconnected groups of brain regions (hubs). Within each diagonal block, diagonal entries are drawn from $\mathrm{Unif}(0.72, 0.85)$, whereas off-diagonal entries are generated from $\mathrm{Unif}(0.03, 0.08)$ with randomly assigned signs. 
The off-diagonal blocks represent connectivity between different hubs. When $Z_i=1$, a subset of these entries is replaced by values of magnitude $\delta \in \{0.02, 0.04, 0.06, 0.08, 0.10\}$, representing the strength of the intervention effect. When $Z_i=0$, all entries in the off-diagonal blocks are set to zero ($\delta = 0$).
Baseline connectivity is further allowed to depend on subject-level covariates $\mathbf{W}_i$, which modifying the transition matrices. After generating, each transition matrix is rescaled to satisfy weak stationarity. For each simulation replication, the patterns of non-zero entries in the transition matrices is fixed across subjects. 

\subsection{Additional simulation study results}

We examine estimation bias under the null ($\delta=0$), where the true effect is supposed to be zero for all components.

\begin{table}[!ht]
\centering
\begin{tabular}{l|rr|rr}
\toprule
Bias & \multicolumn{2}{c|}{$n=100$} &  \multicolumn{2}{c}{$n=200$} \\ 
& $p=21$ & $p=51$ & $p=21$ & $p=51$ \\
\midrule
(i) Sample splitting + correct PS  & 0.012 & 0.093 & -0.001 & 0.078 \\
(ii) Sample splitting + incorrect PS & 0.326 & 0.664 & 0.323  & 0.659 \\
(iii) No sample splitting + correct PS & 0.012 & 0.093 & -0.001 & 0.078 \\
(iv) No sample splitting + incorrect PS  & 0.326 & 0.664 & 0.323  & 0.659 \\
\bottomrule
\end{tabular}

\vspace{0.8em}
\begin{tabular}{l|rr|rr}
\toprule
RMSE & \multicolumn{2}{c|}{$n=100$} &  \multicolumn{2}{c}{$n=200$} \\ 
 & $p=21$ & $p=51$ & $p=21$ & $p=51$ \\
\midrule
(i) Sample splitting + correct PS   & 0.693 & 0.775 & 0.588 & 0.666 \\
(ii) Sample splitting + incorrect PS  & 0.970 & 1.190 & 0.946 & 1.174 \\
(iii) No sample splitting + correct PS & 0.693 & 0.775 & 0.588 & 0.666 \\
(iv) No sample splitting + incorrect PS  & 0.970 & 1.190 & 0.946 & 1.174 \\
\bottomrule
\end{tabular}
\caption{ \label{tab:bias_rmse} Bias  (top panel) and RMSE (bottom panel) of the estimated effective connectivity components under the global null hypothesis ($\delta=0$). Bias and RMSE are averaged across ROIs, subjects, and Monte Carlo replications.}
\end{table}

Table~\ref{tab:bias_rmse} show the mean bias and the root mean squared error (RMSE) averaged across ROIs, subjects, and Monte Carlo replications. Across all settings, the proposed estimator exhibits mean bias close to zero and relatively small RMSE, both of which decrease as the sample size increases from $n=100$ to $n=200$. These findings indicate that, the proposed procedure provides an asymptotically unbiased effect estimator under the null. Methods~(ii) and~(iv), without correctly specified propensity scores, show substantially inflated biases across all $(p,n)$ combinations.  Method~(iii), which uses the second half of the time series both for selecting conditions sets and deriving the causal estimator, performs similarly to the proposed method in all three bias metrics. This may be due to limited degree of overfitting in our data-generating mechanism, which is based on relatively simple VAR models. As expected, both absolute bias and RMSE increases with dimension ($p$), while increasing the sample size from $n=100$ to $n=200$ reduces variability. Overall, under the global null, the proposed method remains approximately unbiased.

\begin{figure}[!ht]
    \centering
    \includegraphics[width=0.8\linewidth]{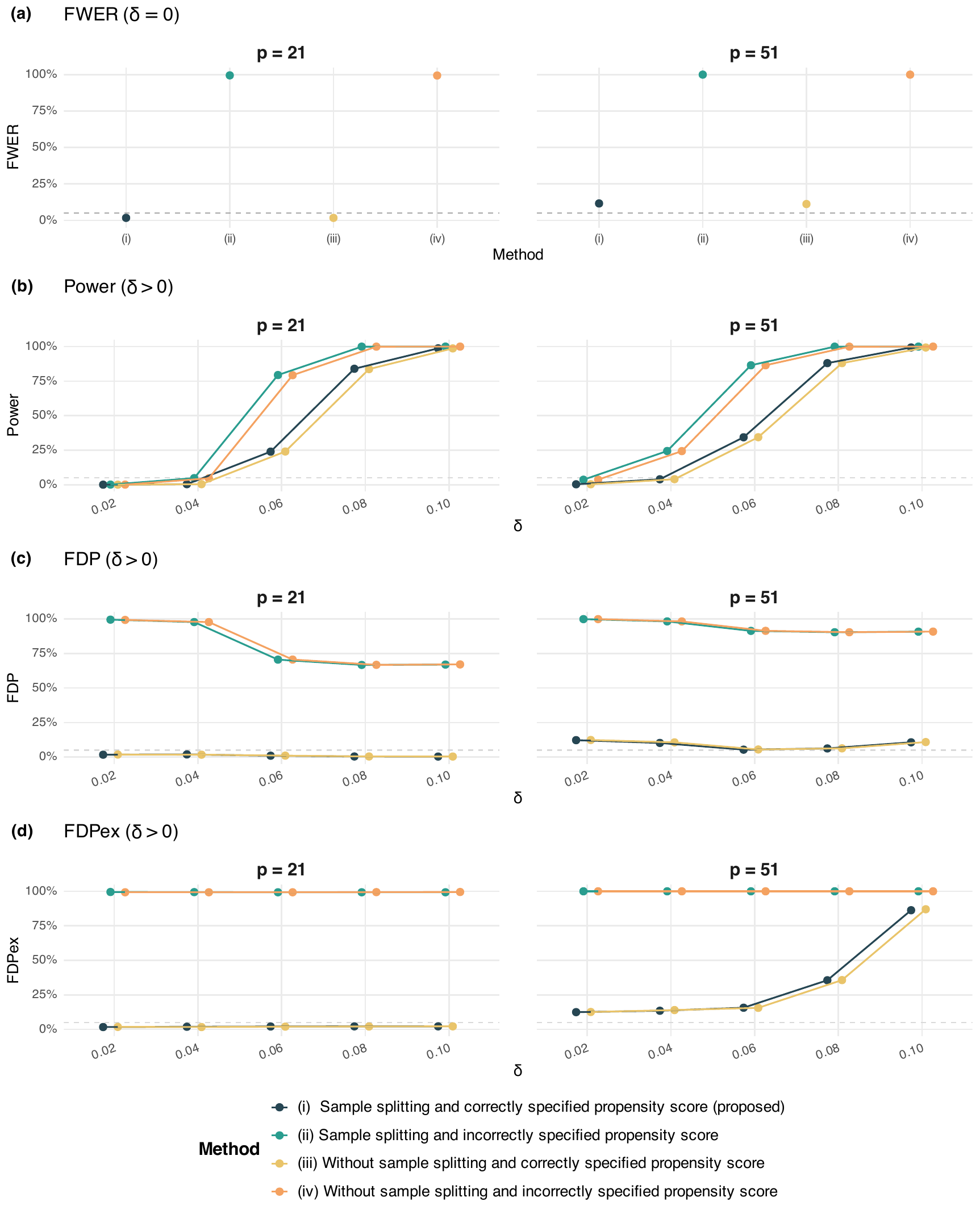}
    \caption{Simulation results when the number of subjects $n=200$. Panel (a) shows the FWER under global null; panel (b) shows power under alternatives with varying intervention intensities ($\delta$); panels (c) and (d) show the FDP and FDPex, respectively, across different values of $\delta$.}
    \label{fig:simulation_results_appendix}
\end{figure}

Figure~\ref{fig:simulation_results_appendix} shows the simulation results of FWER, power, FDP, FDPex across different settings when $n=200$. The patterns observed for $n=100$ (shown in Figure~\ref{fig:simulation_results} in the main manuscript) remain largely unchanged, while the power increases when $n=200$.

\section{Details and additional data application results}\label{supplementary:data_application}
\subsection{Inclusion criteria}
Imaging and clinical data are obtained from phases 3 and 4 of the ADNI. We include participants with resting-state fMRI and amyloid PET scans acquired at the same study visit to ensure temporal alignment between the intervention and functional imaging outcomes. For each resting-state fMRI scan, a concurrent T1-weighted structural MRI within one year is required to facilitate accurate spatial registration. Participants are restricted to those classified at entry as cognitively normal (CN), significant memory concern (SMC), early mild cognitive impairment (EMCI), or mild cognitive impairment (MCI). Accordingly, both the amyloid-positive and amyloid-negative groups exclude dementia participants, whose effective connectivity patterns may differ substantially and may be related to dementia processes.

\subsection{Details of fMRI pre-processing}

Resting-state fMRI data are preprocessed using procedures adapted from \cite{yan2010dparsf}. Raw imaging in the Digital Imaging and Communications in Medicine (DICOM) format are converted to Neuroimaging Informatics Technology Initiative (NIfTI) format. For each scan, the first 10 volumes are discarded to allow for signal equilibration and participants' adaptation to scanner noise. Slice-timing correction is then applied using FSL (slicetimer), followed by rigid-body motion correction with FSL (mcflirt), which also generates six motion parameters for later nuisance regression. To improve computational stability and harmonize time series length across participants, motion-corrected data are truncated to the first 197 time points when longer acquisitions were available. A reference volume is extracted from the motion-corrected data and corrected for intensity inhomogeneity using N4 bias field correction \citep{tustison_n4itk_2010}, followed by skull stripping. For each participant, a concurrent T1-weighted structural MRI acquired closest in time to the resting-state fMRI scan is selected and similarly processed with N4 bias correction and skull stripping. Spatial normalization is performed using a two-stage registration strategy implemented in Advanced Normalization Tools (ANTs). First, the reference image is nonlinearly registered to the participant's T1-weighted image using symmetric diffeomorphic normalization (SyN) \citep{avants_symmetric_2008}. Second, the T1-weighted image is registered to the MNI152 T1 2 mm standard template, and the resulting composite transformations are applied to the full resting-state fMRI. The normalized resting-state fMRI data are spatially smoothed with a Gaussian kernel corresponding to a full width at half maximum (FWHM) of approximately 4.7 mm.

Regional mean time series are then extracted using the Automated Anatomical Labeling 2 (AAL2) atlas, resulting in 120 time series of ROIs per scan\citep{rolls_implementation_2015}. Nuisance regression is applied to remove motion-related artifacts. Linear trends are subsequently removed, and the data are temporally band-pass filtered (0.01-0.10 Hz) to isolate low-frequency fluctuations associated with spontaneous neural activity. Finally, weak stationarity of each ROI series is assessed using the augmented Dickey-Fuller (ADF) test, and only runs in which all 120 ROIs satisfied the stationarity condition are retained for downstream analysis. After preprocessing and quality control procedures, the final sample cohort consists of $n=81$ participants. All preprocessing and analysis code is publicly available at \url{github.com/haiyuesong/Causal-Inference-for-Unobservable-Multivariate-Outcomes}.

\putbib[references]
\end{bibunit}

\end{document}